\documentclass[12pt]{article}

\usepackage{amsmath,amssymb,amsthm}
\usepackage{graphicx}
\usepackage[round]{natbib}
\usepackage[colorlinks=true,
            linkcolor=blue,
            citecolor=blue,
            urlcolor=blue]{hyperref}
\usepackage{url}
\usepackage{booktabs}
\usepackage[table]{xcolor}
\usepackage{hyperref}
\usepackage{dsfont}
\usepackage{bm}
\usepackage{multirow}
\usepackage{subcaption}
\usepackage{adjustbox}

\usepackage{algorithm}
\usepackage{algcompatible}
\usepackage[noend]{algpseudocode}

\makeatletter
\def\BState{\State\hskip-\ALG@thistlm}
\makeatother

\newtheorem{theorem}{Theorem}

\newtheorem{proposition}{Proposition}
\newtheorem{lemma}{Lemma}

\usepackage{lineno}
\usepackage{setspace}
\usepackage[margin=0.7in]{geometry}
\title{Bayesian Global--Local Shrinkage with Univariate Guidance for Ultra-High-Dimensional Regression}

\author{
Priyam Das \\
Department of Biostatistics \\
Virginia Commonwealth University
}

\date{}

\begin{document}

\maketitle
\begin{abstract}
We propose Bayesian Univariate-Guided Sparse Regression (BUGS), a novel global--local shrinkage framework that incorporates marginal association information directly into the prior through a continuous modulation of shrinkage. Unlike existing approaches that treat predictors symmetrically or rely on post hoc screening, BUGS embeds univariate guidance within the nonlinear variance structure of a regularized horseshoe prior, inducing adaptive shrinkage that enhances signal--noise separation. We establish theoretical guarantees including prior concentration, posterior contraction, and guidance-induced shrinkage separation, while demonstrating robustness under uninformative guidance. To enable scalability in ultra-high dimensions, we develop BUGS-Active, an active-set MCMC approximation that restricts local updates to a data-adaptive subset \(A_n\), reducing per-iteration complexity from $O(p)$ to $O(|A_n|)$ while preserving key theoretical properties such as sure screening and contraction. Empirically, the proposed framework achieves strong signal recovery together with substantially improved control of false discovery rates relative to existing methods. BUGS-Active scales to dimensions up to $p \approx 10^6$, and is applied to a DNA methylation study with $n=1051$ subjects and approximately $850{,}000$ CpG sites, yielding strong predictive performance and interpretable sparse selection. These results establish marginally guided shrinkage as a powerful and scalable paradigm for high-dimensional Bayesian inference.
\end{abstract}

\noindent%
{\it Keywords:}  Bayesian variable selection, global--local shrinkage, univariate-guided sparse regression, false discovery control, ultra-high-dimensional inference, active-set MCMC

\section{Introduction}\label{sec:intro}
High-dimensional regression problems, characterized by $p \gg n$ and sparse underlying signals, arise routinely in modern scientific applications such as genomics, epigenomics, and microbiome studies \citep{li_microbiome, Shokralla2012, Caporaso2011}. The central statistical challenge is to identify a small subset of relevant predictors while controlling false discoveries and maintaining reliable uncertainty quantification. Classical regularization methods, including the Lasso and its extensions \citep{Tibshirani1996, Zou2006, Zou2005, friedman2010glmnet}, provide scalable solutions but often suffer from estimation bias and instability under strong predictor correlations. Bayesian approaches offer a principled alternative through adaptive shrinkage and coherent uncertainty quantification, with global--local shrinkage priors emerging as a particularly effective class for sparse high-dimensional inference \citep{PolsonScott2011, carvalho2010horseshoe, bhadra2017horseshoeplus, bhattacharya2015dl, zhang2022r2d2}. These priors balance aggressive shrinkage of noise variables with minimal shrinkage of strong signals via the interaction of global and local scales, while extensions such as the regularized horseshoe improve finite-sample stability through slab regularization \citep{PiironenVehtari2017}. Theoretical results further show that such priors achieve near-optimal posterior contraction under sparsity \citep{vanderPasKleijnvdV2014, vanderPasSzabovdV2017}.

Despite these advantages, existing global--local shrinkage priors treat predictors symmetrically a priori and rely on the likelihood to distinguish signals from noise. In high-dimensional regimes, however, marginal associations can provide useful preliminary information about variable relevance. This observation underlies screening and empirical Bayes approaches \citep{fan2008sis, efron2010large}, as well as recent guided regularization methods such as UniLasso \citep{chatterjee2025unilasso}. While these methods can improve selection performance, they typically rely on hard thresholding or two-stage procedures and do not incorporate marginal information within a fully Bayesian shrinkage framework.

In this paper, we propose \emph{Bayesian Univariate-Guided Sparse Regression} (BUGS), a global--local shrinkage framework that integrates marginal relevance information directly into the prior through a continuous modulation of shrinkage. Specifically, we construct guidance statistics based on marginal associations and embed them within the nonlinear variance mapping of a regularized horseshoe prior. This yields a guidance-modulated shrinkage mechanism that adaptively prioritizes predictors with stronger marginal evidence while preserving robustness to noise and correlation. Unlike existing covariate-dependent shrinkage formulations \citep{PiironenVehtari2017}, the proposed construction modifies the transition between shrinkage and slab behavior, effectively inducing a data-adaptive shrinkage threshold.

\begin{figure}[!t]
    \centering
    \includegraphics[width=0.70\linewidth]{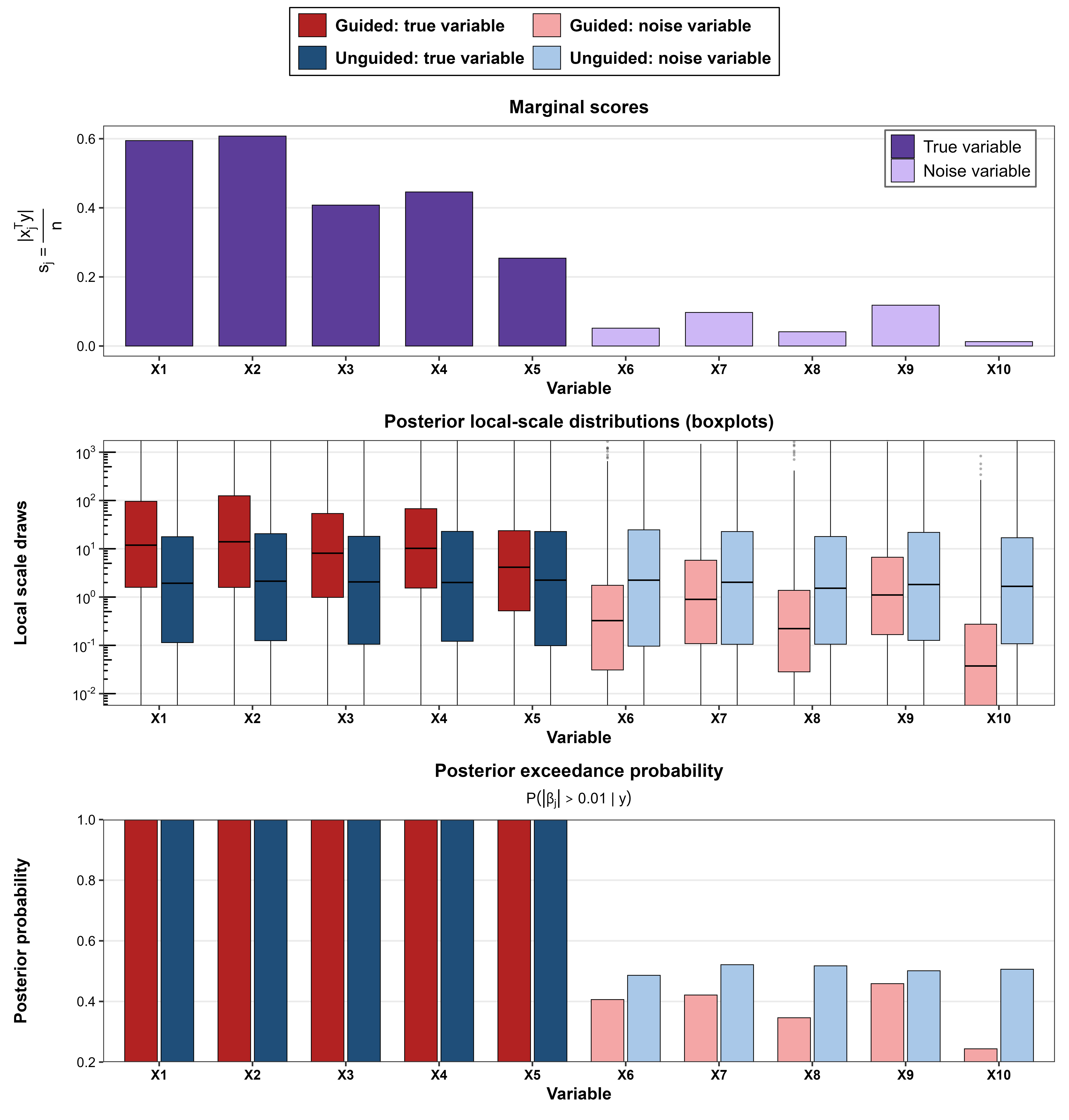}
    \caption{Illustration of marginal guidance in a sparse regression example with $n=100$ and $p=10$. The guided prior inflates local scales for signal variables and shrinks noise variables more aggressively relative to the regularized horseshoe, resulting in sharper separation in posterior exceedance probabilities.}
    \label{fig:motivational_plot}
\end{figure}

A key feature of the proposed approach is its ability to balance sensitivity and specificity. By incorporating marginal guidance within the shrinkage mechanism, the method reduces shrinkage for relevant variables while maintaining strong shrinkage for noise variables, leading to improved control of false discoveries without sacrificing signal recovery. This effect is illustrated in Figure~\ref{fig:motivational_plot} (see Supplementary Material S1 for details), where the guided prior yields clearer separation between signal and noise variables compared to the unguided regularized horseshoe. From a theoretical perspective, we establish posterior contraction under standard sparsity conditions \citep{GhosalGhoshvdV2000, vanderPasKleijnvdV2014, vanderPasSzabovdV2017}, and show that marginal guidance induces shrinkage separation under informative regimes while remaining robust when guidance is uninformative.

To address computational challenges in ultra-high dimensions, we develop a scalable approximation, termed BUGS-Active, which restricts local updates to a data-adaptive active set $A_n$. This reduces the cost of local updates from $O(p)$ to $O(|A_n|)$, where $|A_n| \ll p$, while preserving global coefficient updates. We show that this active-set construction retains key theoretical guarantees, including sure screening and posterior contraction under suitable conditions \citep{johndrow2020scalable, bhattacharya2016fast}. Empirically, the proposed approach achieves strong signal recovery together with substantially improved control of false discoveries relative to existing methods. The BUGS-Active algorithm scales to dimensions up to $p \approx 10^6$ and is applied to a DNA methylation study with $n=1051$ subjects and approximately $850{,}000$ CpG sites, yielding accurate prediction and interpretable sparse selection. These results demonstrate that marginally guided shrinkage provides a powerful and scalable framework for high-dimensional Bayesian inference.

\section{Marginally Guided Global--Local Shrinkage Prior}
\label{sec:method}
\subsection{Model Setup}
We consider the Gaussian linear model

\begin{equation}
\label{eqn:likelihood}
y = X\beta + \varepsilon, 
\qquad \varepsilon \sim N_n(0,\sigma^2 I_n),
\end{equation}

\noindent where $y \in \mathbb{R}^n$, $X \in \mathbb{R}^{n \times p}$, and $\beta \in \mathbb{R}^p$. We focus on the high-dimensional regime $p \gg n$ under sparsity of $\beta$. Throughout, $y$ and the columns of $X$ are standardized to have mean zero and unit variance, so that regression coefficients are directly comparable across predictors.

\subsection{Univariate Guidance Statistics}
Let $x_j$ denote the $j$th column of $X$. To quantify marginal relevance, we define the score

\begin{equation}
\label{eqn:score}
s_j = \frac{1}{n} \big| x_j^\top y \big|, \qquad j = 1,\dots,p, 
\end{equation}

\noindent which corresponds to the absolute marginal correlation under standardization. We stabilize and standardize these scores via

\[
\tilde z_j = \log(s_j + \epsilon), 
\qquad
z_j = \frac{\tilde z_j - \bar{\tilde z}}{\mathrm{sd}(\tilde z_1,\dots,\tilde z_p)},
\]

%\vspace{-0.7cm}
\noindent and define the clipped guidance statistic
%%\vspace{-0.7cm}
\[
\tilde z_j^{\,*} = \max\{-C_z, \min(z_j, C_z)\}.
\]

%%\vspace{-0.7cm}
\noindent The quantities $\tilde z_j^{\,*}$ act as covariates encoding marginal evidence, with larger values indicating stronger relevance. Although constructed from the observed data, we treat \(\tilde z^{\,*}\) as fixed when defining the prior, following standard treatments of data-dependent priors. This conditional formulation avoids double use of the data in theoretical analysis and is equivalent to a construction based on an auxiliary sample independent of the likelihood. 

Alternative choices of $s_j$ (e.g., marginal $t$-statistics or Bayes factors) can be used without altering the framework, provided they preserve the ordering of marginal evidence \citep{fan2008sis, efron2010large, chatterjee2025unilasso}.

%%\vspace{-0.7cm}
\subsection{Marginally Guided Regularized Horseshoe Prior}%%\vspace{-0.3cm}
We propose a marginally guided extension of the regularized horseshoe prior in which univariate relevance information is incorporated directly into the global--local shrinkage mechanism; a schematic summary is given in Figure~\ref{fig:prior_flowchart}. Specifically, for each coefficient,
%%\vspace{-0.7cm}
\[
\beta_j \mid \lambda_j, \tau, c, \eta, \sigma^2, \tilde z_j^{\,*}
\sim N\!\left(0,\sigma^2 \tilde\kappa_j^2\right),
\]

%%\vspace{-0.8cm}
\noindent with effective variance
%%\vspace{-0.4cm}
\begin{equation}
\label{eqn:kappa_tilde}
\tilde{\kappa}_j^2
=
\frac{c^2 \tau^2 \lambda_j^2 \exp(\eta \tilde z_j^{\,*})}
     {c^2 + \tau^2 \lambda_j^2 \exp(\eta \tilde z_j^{\,*})}.
\end{equation}

%\vspace{-0.3cm}
\noindent For notational convenience, define the mapping
%\vspace{-0.3cm}
\[
\kappa^2(z;\lambda,\tau,c,\eta)
:=
\frac{c^2 \tau^2 \lambda^2 \exp(\eta z)}
     {c^2 + \tau^2 \lambda^2 \exp(\eta z)},
\]

%\vspace{-0.3cm}
\noindent so that \(\tilde{\kappa}_j^2 = \kappa^2(\tilde z_j^{\,*};\lambda_j,\tau,c,\eta)\). Here, $\lambda_j$ is the local scale, $\tau$ is the global scale, $c$ is the slab regularization parameter, and $\eta \ge 0$ controls the influence of marginal guidance. Larger values of $\tilde z_j^{\,*}$ increase the effective variance and therefore reduce shrinkage, whereas smaller values induce stronger shrinkage. Thus, the proposed construction preserves the regularized horseshoe structure while introducing a continuous guidance mechanism. Figure~\ref{fig:prior_mechanism} illustrates how guidance modifies both the multiplier $\exp(\eta \tilde z_j^{\,*})$ and the resulting effective variance.
\begin{figure}[!h]
    \centering
    \includegraphics[width=0.7\linewidth]{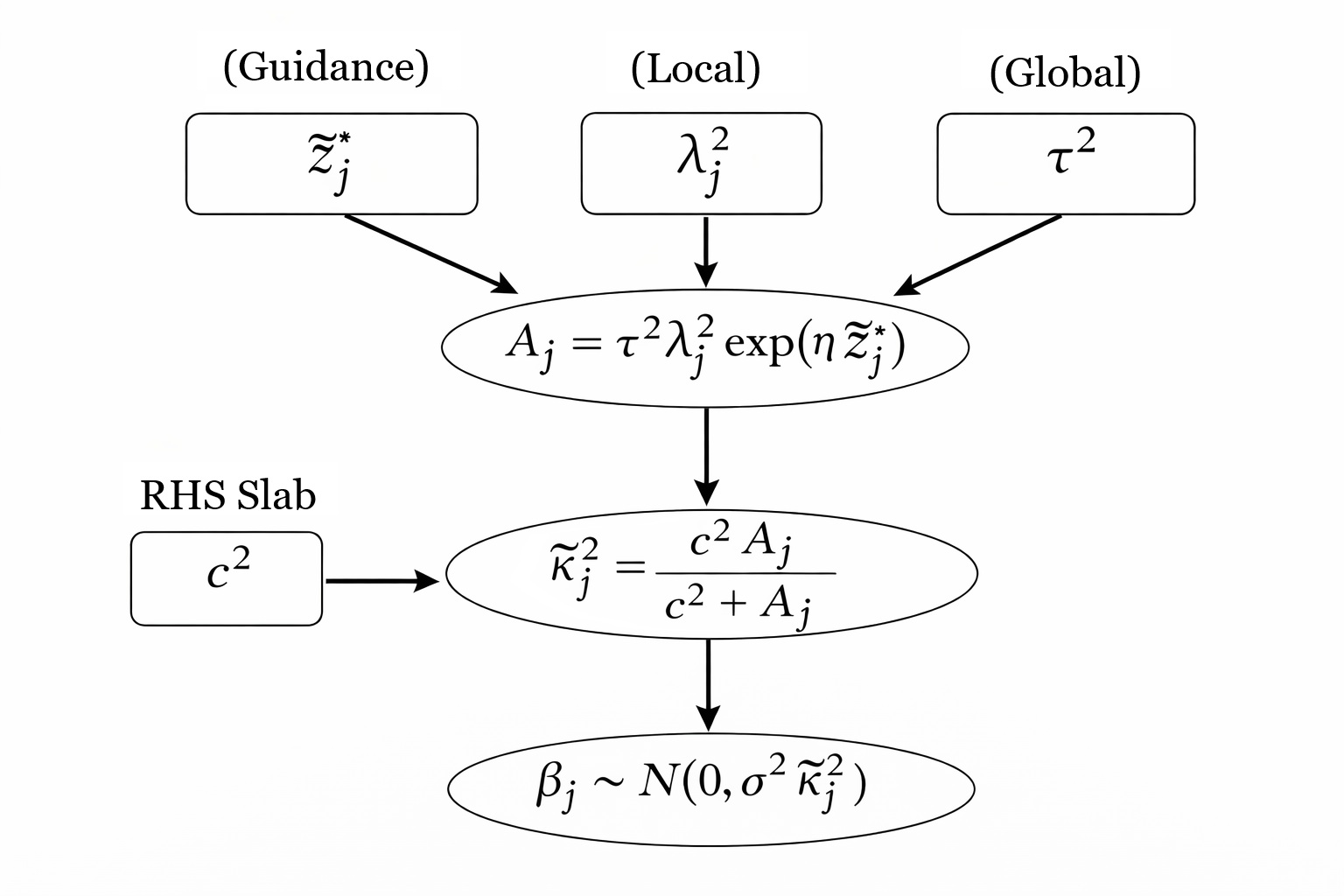}
    \caption{Schematic representation of the guided regularized horseshoe prior. 
        The effective scale $A_j = \tau^2 \lambda_j^2 \exp(\eta \tilde{z}_j^{\,*})$ combines global shrinkage ($\tau^2$), local shrinkage ($\lambda_j^2$), and marginal guidance through the multiplicative factor $\exp(\eta \tilde{z}_j^{\,*})$. 
        The slab parameter $c^2$ then regularizes this scale via $\tilde{\kappa}_j^2 = \frac{c^2 A_j}{c^2 + A_j}$, yielding the effective prior variance of $\beta_j$. Thus, guidance acts multiplicatively on the local--global shrinkage mechanism, while the slab ensures bounded variance for large signals.}
    \label{fig:prior_flowchart}
\end{figure}

\begin{figure}[!t]
    \centering
    \includegraphics[width=0.99\linewidth]{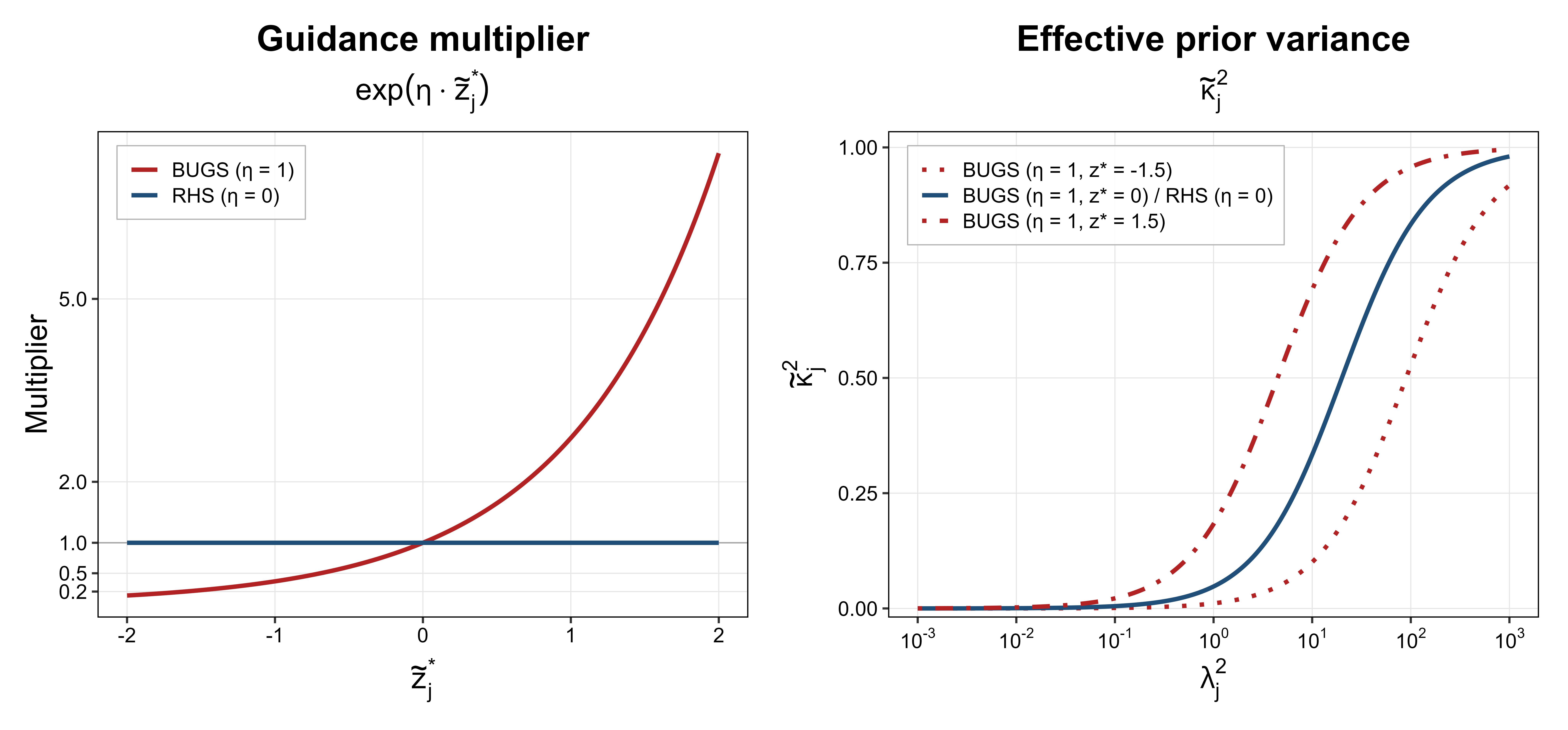}
    \caption{Effect of marginal guidance on the prior. \textit{Left:} Guidance multiplier $\exp(\eta \tilde{z}_j^{\,*})$ as a function of the standardized guidance statistic $\tilde{z}_j^{\,*}$, illustrating how positive (negative) values amplify (attenuate) the effective scale relative to the baseline regularized horseshoe ($\eta=0$). \textit{Right:} Induced effective prior variance $\tilde{\kappa}_j^2$ as a function of the local scale $\lambda_j^2$ under different guidance levels. Positive guidance shifts the shrinkage curve left, leading to earlier escape from shrinkage, while negative guidance delays this transition. When $\tilde{z}_j^{\,*}=0$, the prior reduces to the standard regularized horseshoe.}
    \label{fig:prior_mechanism}
\end{figure}

%\vspace{-1.2cm}
\subsection{Relation to screening and data-guided shrinkage methods}
\label{sec:relation_existing}%\vspace{-0.3cm}
The marginal scores $s_j$ in \eqref{eqn:score} are closely related to screening procedures such as sure independence screening \citep{fan2008sis}, which rank predictors using marginal associations. The distinction is that screening methods use such information primarily as a preprocessing device, whereas our approach incorporates it directly into the prior so that guidance acts continuously through shrinkage rather than through hard thresholding.

Because the guidance statistics $\tilde z_j^{\,*}$ are computed from the observed data, the prior is formally data-dependent and is interpreted conditionally on the realized guidance vector. In this sense, it differs from classical empirical-Bayes procedures \citep{robbins1956empirical, efron2010large}, which estimate prior parameters by marginal likelihood optimization. Here, the data enter only through a stabilized and bounded transformation of marginal scores, after which posterior inference proceeds conditionally on the realized guidance values. The proposed prior is also related to covariate-dependent or guided shrinkage constructions, including horseshoe-type formulations that use external information to weight local scales \citep{carvalho2010horseshoe, PiironenVehtari2017}. The key distinction is structural. Standard weighted formulations typically rescale prior variances or local scales multiplicatively. In contrast, our method inserts the guidance term inside the nonlinear regularized horseshoe variance map in \eqref{eqn:kappa_tilde}, so that it interacts jointly with the global scale $\tau$, local scale $\lambda_j$, and slab parameter $c$. As a consequence, guidance changes not only the magnitude of shrinkage but also the transition between strong shrinkage and slab-like behavior, effectively shifting the shrinkage threshold. The proposed prior may therefore be viewed as a continuous, guidance-informed generalization of marginal screening ideas within a global--local shrinkage framework.
%\vspace{-0.8cm}
\subsection{Prior hierarchy and interpretation}%\vspace{-0.3cm}

The prior hierarchy is given by
%\vspace{-0.4cm}
\begin{equation}
\label{eqn:hierarchy}
\lambda_j \sim C^+(0,1),\;
\tau \sim C^+(0,\tau_0),\;
c^2 \sim \mathrm{IG}(a_c,b_c),\;
\eta \sim N^+(0,\sigma_\eta^2),\;
\sigma^2 \sim \mathrm{IG}(a_\sigma,b_\sigma),
\end{equation}

%\vspace{-0.4cm}
\noindent independently for $j=1,\dots,p$, where $C^+$ and $N^+$ denote half-Cauchy and half-normal distributions, respectively.

The model defines a univariate guided global--local shrinkage prior. The global parameter $\tau$ controls overall sparsity, while the local scales $\lambda_j$ enable predictor-specific adaptation; the slab parameter $c$ stabilizes large coefficients. Marginal guidance enters multiplicatively through $\exp(\eta \tilde{z}_j^{\,*})$, modulating the effective shrinkage scale, with $\eta$ learned from the data. When $\eta=0$, the prior reduces to the regularized horseshoe. The prior remains centered at zero for all $\tilde{z}_j^{\,*}$, so that guidance affects only the degree of shrinkage, not the direction of effects. The joint posterior (conditional on $\tilde{z}^{\,*}$) is
%\vspace{-0.8cm}
\begin{align}
\label{eqn:posterior}
\pi(\beta,\lambda,\tau,c,\eta,\sigma^2 \mid y,X,\tilde{z}^{\,*})
&\propto
f(y \mid X,\beta,\sigma^2)
\prod_{j=1}^p 
\pi(\beta_j \mid \lambda_j,\tau,c,\eta,\sigma^2,\tilde{z}_j^{\,*})
\pi(\lambda_j) \nonumber \\
&\quad \times \pi(\tau)\pi(c^2)\pi(\eta)\pi(\sigma^2).
\end{align}

%\vspace{-0.98cm}
\section{Posterior Computation}
\label{sec:computation}%\vspace{-0.3cm}
Posterior inference is performed via a Markov chain Monte Carlo (MCMC) algorithm combining Gibbs updates for conjugate components with slice sampling for non-conjugate parameters \citep{neal2003slice}. The Gaussian likelihood together with the shrinkage prior yields closed-form updates for the regression coefficients and noise variance, while the remaining parameters are updated using univariate slice sampling on log-transformed scales.
%\vspace{-0.4cm}
\subsection{Sampling Scheme}%\vspace{-0.3cm}
Let $D = \mathrm{diag}(\tilde{\kappa}_1^2,\dots,\tilde{\kappa}_p^2)$, where $\tilde{\kappa}_j^2$ is defined in \eqref{eqn:kappa_tilde}. Each MCMC iteration proceeds by sequentially updating $(\beta, \sigma^2, \lambda, \tau, c^2, \eta)$ from their full conditionals.
%\vspace{-0.4cm}
\paragraph{Update of $\beta$.} Conditional on the remaining parameters, the posterior of $\beta$ is Gaussian,
%\vspace{-0.5cm}
\[
\beta \mid \cdot \sim N_p\!\left(m_\beta, V_\beta\right),
\quad
V_\beta = \sigma^2 (X^\top X + D^{-1})^{-1}, 
\quad
m_\beta = (X^\top X + D^{-1})^{-1} X^\top y.
\]

%\vspace{-0.6cm}
\noindent Direct sampling is infeasible when $p \gg n$ due to inversion of a $p \times p$ matrix. We therefore adopt the fast sampling scheme of \citet{bhattacharya2016fast}, based on the Woodbury identity,
%\vspace{-0.3cm}
\[
(X^\top X + D^{-1})^{-1}
= D - D X^\top (I_n + X D X^\top)^{-1} X D,
\]

%\vspace{-0.4cm}
\noindent which reduces the computational cost from $O(p^3)$ to $O(n^2 p + n^3)$. This enables efficient sampling by solving an $n \times n$ linear system, making the update scalable in high dimensions.
%\vspace{-0.6cm}
\paragraph{Update of $\sigma^2$.}%\vspace{-0.6cm}
The noise variance admits a conjugate inverse-gamma update,
%\vspace{-0.6cm}

\[
\sigma^2 \mid \cdot \sim \mathrm{IG}\!\left(
a_\sigma + \frac{n+p}{2},\;
b_\sigma + \frac{1}{2}\|y - X\beta\|_2^2
+ \frac{1}{2}\sum_{j=1}^p \frac{\beta_j^2}{\tilde{\kappa}_j^2}
\right).
\]

%\vspace{-0.8cm}
\paragraph{Updates of $(\lambda_j), \tau, c^2, \eta$.}
The remaining parameters enter the model through the nonlinear shrinkage factor $\tilde{\kappa}_j^2$ and are updated via univariate slice sampling \citep{neal2003slice} on suitable log-transformed scales. In particular, their conditional densities arise from the dependence of $\tilde{\kappa}_j^2$ on $(\lambda_j,\tau,c^2,\eta)$. We update $\lambda_j$, $\tau$, and $c^2$ on $\log$-scales, and $\eta$ on $[0,\infty)$, using their respective conditionals. Details are provided in Supplementary Material S2.
%\vspace{-0.6cm}
%%%%%%%%%%%%
\subsection{Active-set MCMC approximation for ultra-high dimensions}%\vspace{-0.3cm}
\label{subsec:BUGS_active_sampling}

For moderate to high-dimensional settings (e.g., \(p \lesssim 10^4\)), the full MCMC sampler provides exact posterior inference. In ultra-high-dimensional regimes (\(p \gg 10^4\)), however, updating all local shrinkage parameters \(\{\lambda_j\}_{j=1}^p\) at each iteration becomes computationally prohibitive. To address this, we introduce an active-set MCMC approximation that restricts local updates to a subset of coordinates likely to be influential. This exploits a key property of global--local shrinkage priors: most coefficients are strongly shrunk toward zero, so updating their associated local scales has negligible impact on posterior inference. Related ideas have been explored in scalable Bayesian regression \citep{johndrow2020scalable}.

At each MCMC iteration, we construct an active set \(A_n \subset \{1,\dots,p\}\), which may vary across iterations, consisting of coordinates deemed potentially relevant based on the current state of the chain. A coordinate \(j\) is included in \(A_n\) if it satisfies either of the following:
%\vspace{-0.4cm}
\begin{enumerate}
\item it belongs to a fixed-size subset of predictors with the largest marginal guidance values \(|\tilde{z}_j^{\,*}|\) (the \emph{guidance budget}), or
\item its current coefficient magnitude exceeds a threshold, \(|\beta_j| > t_n\).
\end{enumerate}

%\vspace{-0.4cm}
\noindent The active set may additionally be capped to include only the largest coefficients by magnitude to control computational cost. Given \(A_n\), local shrinkage parameters are updated only for indices in \(A_n\), while coefficients are updated globally.

The MCMC updates then proceed as follows. The regression coefficients \(\beta\) and global parameters \((\tau, c^2, \eta, \sigma^2)\) are updated as in the full sampler. The local shrinkage parameters \(\{\lambda_j\}\) are updated only for \(j \in A_n\), while for \(j \notin A_n\) they are fixed at a small baseline value, enforcing strong shrinkage. This yields a computationally efficient approximation that reduces the cost of local updates from \(O(p)\) to \(O(|A_n|)\) per iteration, while retaining global updates for \(\beta\) so that all coordinates continue to interact through the likelihood.

We emphasize that this is an approximate MCMC scheme tailored for ultra-high-dimensional settings rather than an exact sampler. The active-set construction is designed to approximate the screening mechanism analyzed in Section~\ref{subsec:BUGS_active}. In particular, the guidance-based and coefficient-based inclusion rules mirror the signal detection and separation conditions in Assumptions (B1)--(B2). Consequently, BUGS-Active can be viewed as a computational realization of this screening-based posterior approximation, inheriting its guarantees for sure screening, support recovery, and posterior contraction under the stated conditions.
%\vspace{-0.7cm}
\section{Theoretical Properties}\label{sec:theory}%\vspace{-0.4cm}
This section establishes theoretical guarantees for the proposed guided shrinkage framework. We first analyze the full model under the guided regularized horseshoe prior, deriving prior concentration and posterior contraction results. We then study the active-set approximation, showing that the data-driven screening step preserves support recovery and contraction rates in ultra-high-dimensional regimes. 

Detailed proofs and full assumption statements are provided in Supplementary Material S3, while brief proof sketches are included for the main results. Throughout, we treat the clipped guidance vector
%\vspace{-0.3cm}
\[
\tilde z^{\,*} = (\tilde z_1^{\,*},\dots,\tilde z_p^{\,*})^\top
\]

%\vspace{-0.4cm}
\noindent as fixed and conduct inference conditional on $\tilde z^{\,*}$. This is equivalent to a formulation in which $\tilde z^{\,*}$ is constructed from an auxiliary sample independent of the likelihood, thereby avoiding empirical Bayes complications. Let $S_0 = \{j : \beta_{0j} \neq 0\}$ denote the true support, with $|S_0| = s_0$.
%\vspace{-0.3cm}
\subsection{Properties of the marginally guided shrinkage prior (BUGS)}\label{subsec:BUGS}%\vspace{-0.3cm}
We establish prior concentration and posterior contraction properties of the proposed model in high-dimensional regimes. The analysis is conducted under standard sparsity, design, and regularity conditions, together with assumptions characterizing the role of marginal guidance. We assume:
%\vspace{-0.3cm}
\begin{itemize}
\item[(A1)] Sparsity: $s_0 \log p = o(n)$ and $s_0 \log p \to \infty$.
\item[(A2)] Design: $X$ satisfies a restricted eigenvalue condition \citep{vanderPasKleijnvdV2014}.
\item[(A3)] Gaussian noise: $y = X\beta_0 + \varepsilon$, $\varepsilon \sim N_n(0,\sigma_0^2 I_n)$.
\item[(A4)] Bounded guidance: $\max_j |\tilde z_j^{\,*}| \le C_z$.
\item[(A5)] Bounded signals: $\max_{j\in S_0} |\beta_{0j}| \le B$.
\item[(A6)] Hyperpriors assign positive mass to neighborhoods of admissible values.
\end{itemize}
These conditions are standard in high-dimensional Bayesian regression and ensure well-behaved likelihood geometry and sufficient prior mass in relevant regions.

To characterize the effect of marginal guidance, we next consider two complementary regimes, corresponding to the absence and presence of useful marginal signals:
%\vspace{-0.4cm}
\begin{itemize}
\item[(A7)] Uninformative guidance: the guidance statistics $\tilde z_j^{\,*}$ do not asymptotically distinguish active and inactive coordinates.
\item[(A8)] Informative guidance: the guidance statistics $\tilde z_j^{\,*}$ exhibit a fixed separation between active and inactive coordinates.
\end{itemize}
These conditions are imposed separately in the corresponding theoretical results and are not assumed to hold simultaneously. We further impose the following technical conditions, which are standard in posterior contraction analyses:
%\vspace{-0.4cm}
\begin{itemize}
\item[(A9)] Dimension growth condition relating $n$ and $p$.
\item[(A10)] Standard testing and sieve conditions for posterior contraction 
\citep{GhosalGhoshvdV2000, vanderPasKleijnvdV2014, vanderPasSzabovdV2017}.
\item[(A11)] Bounded design operator norm: $\|X\|_{\mathrm{op}} = O(\sqrt{n})$.
\end{itemize}

%\vspace{-0.4cm}
\noindent A complete formal statement of these assumptions, along with verification of (A10) and related technical details, is provided in Supplementary Material S3.

We first establish a prior concentration result, showing that the proposed prior assigns sufficient mass to an \(\ell_2\)-neighborhood of the true sparse coefficient vector.
%\vspace{-0.4cm}
\begin{theorem}[Conditional prior support]
\label{thm:prior_mass}
Assume the prior hierarchy in \eqref{eqn:hierarchy}, and suppose that
(A1), (A4), (A5), (A6), and (A9) hold. Let
%\vspace{-0.5cm}
\[
\epsilon_n = C \sqrt{\frac{s_0 \log p}{n}}
\]

%\vspace{-0.5cm}
\noindent for a sufficiently large constant \(C>0\). Then there exists a constant
\(C_1>0\), independent of \(n\) and \(p\), such that
%\vspace{-0.5cm}
\[
\Pi\!\left(
\|\beta-\beta_0\|_2 \le \epsilon_n
\,\middle|\,
\tilde z^{\,*}
\right)
\ge
\exp(-C_1 s_0 \log p)
\]

%\vspace{-0.6cm}
\noindent for all sufficiently large \(n\).
\end{theorem}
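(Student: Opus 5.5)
Because every factor in the hierarchy \eqref{eqn:hierarchy} is independent given \(\tilde z^{\,*}\), we lower-bound the prior mass of the \(\epsilon_n\)-ball by intersecting it with a product of simple events and multiplying their probabilities. The hyperparameters are fixed in windows that do not depend on \(n\), except for \(\tau\). We restrict \(\sigma^2\in[\sigma_0^2/2,2\sigma_0^2]\), \(c^2\in[1,2]\) and \(\eta\in[0,1]\). By (A6) each window has probability bounded below by a constant, so the combined cost is \(e^{-O(1)}\), which is negligible against \(e^{-C_1 s_0\log p}\) since \(s_0\log p\to\infty\) by (A1). For the global scale we use the window \(\tau\in[\tau_n,2\tau_n]\) with \(\tau_n=p^{-a}\) for a suitable constant \(a>1\). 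The half-Cauchy density near \(0\) is of order \(\tau_0^{-1}\), so this window has mass \(\gtrsim \tau_n = e^{-a\log p}\), which is also absorbed. By (A4), \(e^{-C_z}\le \exp(\eta\tilde z_j^{\,*})\le e^{C_z}\) on these windows. Hence, uniformly in \(j\), the effective variance \eqref{eqn:kappa_tilde} is comparable (up to constants depending only on \(C_z\)) to the unguided quantity \(c^2\tau^2\lambda_j^2/(c^2+\tau^2\lambda_j^2)\). This is where the guidance enters, and it changes only constants.

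We then split \(\|\beta-\beta_0\|_2\le\epsilon_n\) into the two sufficient conditions \(\|\beta_{S_0}-\beta_{0,S_0}\|_2\le\epsilon_n/2\) and \(\|\beta_{S_0^c}\|_2\le\epsilon_n/2\). For the null coordinates, we use that \(E[\beta_j^2\mid\cdot]=\sigma^2\tilde\kappa_j^2\le \sigma^2 e^{C_z}\tau^2\lambda_j^2\). We condition on \(\lambda_j\le p^{b}\) for all \(j\notin S_0\); by the half-Cauchy tail this has probability at least \((1-2p^{-b}/\pi)^{p}\ge e^{-O(1)}\) when \(b\ge1\). On this event,
\[
E\Bigl[\|\beta_{S_0^c}\|_2^2\Bigm|\cdot\Bigr]\lesssim p\,\tau_n^2p^{2b}=p^{1+2b-2a}.
\]
Choosing \(a>b+1/2\) large enough makes this \(o(\epsilon_n^2)\), using (A9) to control \(p\) against \(n\). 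Markov's inequality then gives conditional probability at least \(1/2\).

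For the signal coordinates, we need each \(\beta_j\), \(j\in S_0\), to fall in an interval of half-width \(\delta_n=\epsilon_n/(2\sqrt{s_0})\asymp\sqrt{\log p/n}\) around \(\beta_{0j}\), where \(|\beta_{0j}|\le B\) by (A5). We force slab-like behaviour by taking \(\lambda_j\in[p^{a},2p^{a}]\), which gives \(\tau\lambda_j\asymp 1\) and hence \(\tilde\kappa_j^2\in[\underline v,\overline v]\) for constants that do not depend on \(n\). The half-Cauchy mass of this window is \(\gtrsim p^{a}\cdot p^{-2a}=p^{-a}\). Given this, the \(N(0,\sigma^2\tilde\kappa_j^2)\) density is bounded below on \([-B-1,B+1]\) by a constant \(m>0\), so the interval has probability at least \(2m\delta_n\ge p^{-1}\) for large \(n\). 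Multiplying over the \(s_0\) signal coordinates contributes \(\exp(-(a+1+o(1))s_0\log p)\). Collecting all factors yields the claim with \(C_1\approx a+2\).

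The main obstacle is reconciling the two requirements on the single global scale \(\tau\). It must be small, polynomial in \(1/p\), to kill the \(p-s_0\) null coordinates, while the signal \(\lambda_j\) must then be polynomially large, at a cost of \(p^{-a}\) per coordinate. The argument works only because the half-Cauchy's polynomial tail keeps that cost at \(e^{-O(\log p)}\) rather than something larger. The other delicate point is checking that (A9) permits \(\log(1/\epsilon_n)=O(\log p)\) and the null-coordinate bound; if (A9) only gives \(\log n\lesssim\log p\), this suffices. The guidance term itself causes no difficulty, because (A4) bounds it uniformly.
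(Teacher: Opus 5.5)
Your argument is correct and follows essentially the paper's route, as far as the main text indicates (the full proof is in the supplement). You use (A4) on a compact $\eta$-window to bound the multiplier $\exp(\eta\tilde z_j^{\,*})$ uniformly, which reduces the problem to the standard regularized-horseshoe prior-mass computation: compact windows for $\sigma^2, c^2, \eta$; a polynomially small $\tau$-window; heavy-tailed windows $\lambda_j\asymp\tau^{-1}$ on $S_0$; small-variance control of the null block; and multiplication by independence.

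Your reading of (A9) as giving $\log n\lesssim\log p$ (for instance $p\gtrsim n$) is what such a bound genuinely needs. The only slip is cosmetic: when $s_0$ is bounded, the $\tau$-window cost $a\log p$ is not negligible next to $s_0\log p$, so $C_1$ is closer to $2a+1$ than to $a+2$, which does not affect the conclusion.
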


We now establish posterior contraction for the proposed prior. We work under the Gaussian linear model with known error variance $\sigma_0^2$ and analyze the induced marginal prior on $\beta$ conditional on the realized guidance vector $\tilde z^{\,*}$. This conditional formulation isolates the effect of the guidance-informed prior and aligns with standard contraction analyses for sparse Gaussian models \citep{GhosalGhoshvdV2000, vanderPasKleijnvdV2014, vanderPasSzabovdV2017}, while avoiding empirical-Bayes complications. The resulting contraction guarantee is stated below.
%\vspace{-0.4cm}
\begin{theorem}[Posterior contraction]
\label{thm:contraction}
Consider the Gaussian linear model with known variance $\sigma_0^2$. Suppose that Assumptions (A1)–(A6), (A9), (A10), and (A11) hold.
Then, for a sufficiently large constant $M>0$,
%\vspace{-0.4cm}
\[
\Pi\!\left(
\|\beta - \beta_0\|_2 \ge M \epsilon_n
\;\middle|\; y, \tilde z^{\,*}
\right)
\to 0
\quad \text{in probability under } P_{\beta_0},
\]

%\vspace{-0.4cm}
\noindent where $\epsilon_n^2 = \frac{s_0 \log p}{n}$.
\end{theorem}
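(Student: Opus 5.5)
We follow the general Ghosal--Ghosh--van der Vaart contraction template, in the form adapted to sparse linear regression by \citet{vanderPasKleijnvdV2014}. Conditional on $\tilde z^{\,*}$, the marginal prior on $\beta$ is a fixed product-type prior. The argument therefore needs three ingredients:
\begin{enumerate}
\item a lower bound on the prior mass of a Kullback--Leibler neighborhood of $\beta_0$;
\item a sieve $\mathcal{B}_n$ whose complement has exponentially small prior mass;
\item exponentially powerful tests of $\beta_0$ against $\{\beta\in\mathcal{B}_n:\|\beta-\beta_0\|_2\ge M\epsilon_n\}$.
\end{enumerate}

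\textbf{Step 1 (evidence lower bound).} With known $\sigma_0^2$, the KL divergence and its second moment between $N_n(X\beta_0,\sigma_0^2 I)$ and $N_n(X\beta,\sigma_0^2 I)$ are both of order $\|X(\beta-\beta_0)\|_2^2/\sigma_0^2$. By (A11) this is at most a constant times $n\|\beta-\beta_0\|_2^2$. Hence the KL ball of radius $n\epsilon_n^2$ contains the Euclidean ball $\{\|\beta-\beta_0\|_2\le c\,\epsilon_n\}$. Theorem~\ref{thm:prior_mass}, applied with the constant adjusted, gives prior mass at least $\exp(-C_1 s_0\log p)=\exp(-C_1 n\epsilon_n^2)$. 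The standard lemma then yields, with $P_{\beta_0}$-probability tending to one, that the denominator of the posterior is at least $\exp(-(C_1+2)n\epsilon_n^2)$.

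\textbf{Step 2 (sieve and effective dimension).} Let $\delta_n=\epsilon_n/p$. Define the sieve
\[
\mathcal{B}_n=\bigl\{\beta:\#\{j:|\beta_j|>\delta_n\}\le K s_0,\ \|\beta\|_\infty\le p^{K}\bigr\}.
\]
The key bound is
\[
\Pi(|\beta_j|>\delta_n\mid\tilde z^{\,*})\lesssim \tau\text{-scale}\times e^{\eta C_z/2}\times \log(1/\delta_n)/\delta_n .
\]
We obtain it by integrating out $\lambda_j$ and using that $\tilde\kappa_j^2\le \tau^2\lambda_j^2 e^{\eta \tilde z_j^{\,*}}\le \tau^2\lambda_j^2 e^{\eta C_z}$ by (A4). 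This reduces the guided prior to a horseshoe whose global scale is inflated by at most $e^{\eta C_z/2}$. We then restrict to the event that $\tau\le (s_0/p)^{1+\gamma}$ and that $\eta$ is bounded. Under (A6), (A9), and the tails of the half-Cauchy and half-normal hyperpriors, this restriction costs at most $e^{-(C_1+4)n\epsilon_n^2}$, possibly after a hyperprior condition formalized in (A10). A binomial tail bound then gives
\[
\Pi(\mathcal{B}_n^c)\le e^{-(C_1+4)n\epsilon_n^2}.
\]
Coordinates for which $\tilde z_j^{\,*}$ is negative only shrink further, so clipping makes the guidance harmless in this step.

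\textbf{Step 3 (tests and conclusion).} On $\mathcal{B}_n$, write $\beta=\beta_S+\beta_{S^c}$ with $|S|\le Ks_0$ and $\|\beta_{S^c}\|_1\le \epsilon_n$. Combine the restricted eigenvalue condition (A2) for sets of size $(K+1)s_0$ with (A11), which controls the small remainder via $\|X\beta_{S^c}\|_2\le\|X\|_{\mathrm{op}}\|\beta_{S^c}\|_2\lesssim\sqrt n\,\epsilon_n$. This gives $\|X(\beta-\beta_0)\|_2\gtrsim\sqrt n\|\beta-\beta_0\|_2$ up to negligible terms. For each support $S$ we use a likelihood-ratio (least squares projection) test on the span of $X_{S\cup S_0}$. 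There are at most $\binom{p}{Ks_0}\le e^{Ks_0\log p}$ supports, and a union bound gives type I error $o(1)$ and type II error at most $e^{-c M^2 n\epsilon_n^2}$. The usual decomposition of the posterior probability of $\{\|\beta-\beta_0\|_2\ge M\epsilon_n\}$ into a test term, a sieve-complement term, and an alternative term then finishes the proof, once $M$ is chosen large relative to $C_1$ and $K$.

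The main obstacle is Step 2, namely the effective-dimension bound under the guided, regularized variance. The factor $e^{\eta\tilde z_j^{\,*}}$ must not destroy the $p^{-1}$-type shrinkage of null coordinates. Here we rely on (A4) and on the boundedness of $\eta$ with high prior probability, absorbing $e^{\eta C_z}$ into the global scale. The slab only decreases variances, so it helps.
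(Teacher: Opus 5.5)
Your skeleton matches the paper's. Theorem~\ref{thm:prior_mass} together with (A11) converts $\ell_2$ prior mass into Kullback--Leibler prior mass $e^{-C_1 n\epsilon_n^2}$ and hence an evidence lower bound. This is combined with a sieve and exponentially consistent tests in the usual Ghosal--Ghosh--van der Vaart decomposition, and (A4) makes the guidance factor harmless. The paper, however, does not construct the sieve or the tests; it takes both directly from (A10). Your Step~3 is a reasonable verification of the testing half from (A2) and (A11), provided the sieve has your form: at most $Ks_0$ coordinates above $\epsilon_n/p$, so the remainder has $\ell_1$-norm at most $\epsilon_n$. The difficulty is Step~2, which you rightly flag as the crux.

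The claim that restricting to $\{\tau\le (s_0/p)^{1+\gamma}\}$ and to bounded $\eta$ costs at most $e^{-(C_1+4)n\epsilon_n^2}$ of prior mass does not follow from the stated hierarchy.

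\textbf{The $\tau$ restriction.} With $\tau\sim C^+(0,\tau_0)$ and $\tau_0$ fixed, $\Pi(\tau\le t)=\tfrac{2}{\pi}\arctan(t/\tau_0)\approx 2t/(\pi\tau_0)$ for small $t$. Hence $\Pi\bigl(\tau>(s_0/p)^{1+\gamma}\bigr)\to 1$: the region you discard is essentially the whole prior. Because the half-Cauchy tail is only polynomial, any cutoff tending to zero, which is what effective-dimension control needs, leaves a complement whose prior mass tends to one.

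\textbf{The $\eta$ restriction.} $\Pi(\eta>K)=2\{1-\Phi(K/\sigma_\eta)\}$ is a fixed positive constant for fixed $K$. An $e^{-c s_0\log p}$ tail forces $K_n\asymp\sqrt{s_0\log p}$. Then $e^{\eta C_z/2}$ can be as large as $e^{O(\sqrt{s_0\log p})}$, which cannot be absorbed into a polynomially small global scale once $s_0\gg\log p$.

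\textbf{The $\|\beta\|_\infty$ constraint.} The bound $\|\beta\|_\infty\le p^K$ has the same polynomial-versus-$e^{-cs_0\log p}$ problem under $c^2\sim\mathrm{IG}(a_c,b_c)$. It is not needed for projection tests, so you can simply drop it.

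So the sieve bound cannot be derived from (A4), (A6), (A9) and the hyperprior tails; it has to be taken from (A10), exactly as the paper does. At that point your Step~2 computation is superfluous and the argument becomes the paper's. Your Step~3 tests still apply only if the sieve supplied by (A10) has your sparse-plus-small form. A genuinely self-contained proof would need something outside the hypotheses. One option is a $p$-dependent or truncated hyperprior on $\tau$, plus a bounded one on $\eta$, guaranteeing $\Pi(|\beta_j|>\epsilon_n/p\mid\tilde z^{\,*})\le p^{-(1+u)}$ coordinatewise. Another is a posterior (rather than prior) concentration argument for $\tau$, of the kind \citet{vanderPasSzabovdV2017} give in the normal-means model.
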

%\vspace{-0.4cm}
\begin{proof}[Proof sketch]
The result follows from standard posterior contraction arguments 
\citep{GhosalGhoshvdV2000, vanderPasKleijnvdV2014}, adapted to the 
guidance-dependent prior. 

The proof combines three ingredients. First, Theorem~\ref{thm:prior_mass} establishes sufficient prior mass in an $\ell_2$-neighborhood of $\beta_0$ at rate $\epsilon_n$, which corresponds to a Kullback--Leibler neighborhood under (A11). Second, exponentially consistent tests for $\|\beta-\beta_0\|_2 \ge M\epsilon_n$ are available under the restricted eigenvalue condition (A2), as encoded in (A10). Third, the sieve condition in (A10) ensures that prior mass outside a suitable subset is exponentially small.

Combining these bounds via the standard testing--prior mass decomposition yields that the posterior mass assigned to 
$\{\|\beta-\beta_0\|_2 \ge M\epsilon_n\}$ vanishes in $P_{\beta_0}$-probability. The guidance factor $\exp(\eta \tilde z_j^{\,*})$ is uniformly bounded under (A4), so the prior retains the same local and tail behavior as classical global--local priors and does not affect the contraction rate.
\end{proof}

We next examine the behavior of the proposed prior under uninformative guidance, where the marginal guidance statistics do not asymptotically distinguish active and inactive coordinates. The following result shows that, in this regime, the guidance component alone does not induce shrinkage separation, thereby ensuring robustness to uninformative guidance.
%\vspace{-0.4cm}
\begin{proposition}[No separation under uninformative guidance]
\label{prop:no_sep}
Suppose that Assumption (A7) holds. Fix \(\tau>0\), \(c>0\),
\(\lambda>0\), and \(\eta \ge 0\). Then
%\vspace{-0.4cm}
\[
\sup_{j\in S_0,\;k\notin S_0}
\left|
\kappa^2(\tilde z_j^{\,*};\lambda,\tau,c,\eta)
-
\kappa^2(\tilde z_k^{\,*};\lambda,\tau,c,\eta)
\right|
\to 0.
\]

%\vspace{-0.4cm}
\noindent Consequently, when the guidance statistics are asymptotically uninformative,
the guidance component alone does not induce shrinkage separation between
active and inactive variables.
\end{proposition}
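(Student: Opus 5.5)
The argument will use two facts: the map $z\mapsto \kappa^2(z;\lambda,\tau,c,\eta)$ is Lipschitz on the bounded guidance range guaranteed by (A4), and (A7) supplies the uniform statement needed to finish. First I will make (A7) quantitative in the natural way:
\[
\delta_n := \sup_{j\in S_0,\;k\notin S_0} \bigl|\tilde z_j^{\,*}-\tilde z_k^{\,*}\bigr| \to 0 .
\]
This is the form the full statement in Supplementary Material S3 should take. If (A7) is instead phrased in terms of the unclipped $z_j$, the same conclusion carries over, because clipping $z\mapsto \max\{-C_z,\min(z,C_z)\}$ is $1$-Lipschitz.

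Next, write $A(z)=\tau^2\lambda^2 e^{\eta z}$, so that $\kappa^2 = g(A)$ with $g(A)=c^2A/(c^2+A)$. Then
\[
g'(A)=\frac{c^4}{(c^2+A)^2}\in(0,1]
\qquad\text{and}\qquad
\frac{dA}{dz}=\eta A .
\]
Hence
\[
\frac{d\kappa^2}{dz}=\eta\,\frac{c^4 A}{(c^2+A)^2}\le \eta\,\frac{c^2}{4},
\]
where the bound comes from AM--GM, $(c^2+A)^2\ge 4c^2A$. This gives a global Lipschitz constant $L=\eta c^2/4$ that does not depend on $z$, $\lambda$ or $\tau$. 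Alternatively, on $[-C_z,C_z]$ one can bound it by $\eta\tau^2\lambda^2 e^{\eta C_z}$.

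By the mean value theorem, for every pair $j\in S_0$, $k\notin S_0$,
\[
\bigl|\kappa^2(\tilde z_j^{\,*})-\kappa^2(\tilde z_k^{\,*})\bigr|\le \tfrac{\eta c^2}{4}\,\bigl|\tilde z_j^{\,*}-\tilde z_k^{\,*}\bigr| .
\]
Taking the supremum bounds the left-hand side of the proposition by $(\eta c^2/4)\,\delta_n\to0$. If $\eta=0$, the difference is identically zero.

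The main obstacle is not the calculus but pinning down exactly what (A7) asserts. If it means only that the two groups' guidance values share asymptotically the same range (for example, $\sup_{S_0}\tilde z^{\,*}-\inf_{S_0^c}\tilde z^{\,*}\to0$ and symmetrically), then the pairwise supremum still reduces to $\delta_n$ as defined above. If it is merely distributional, such as matching empirical distributions, then the uniform supremum need not vanish. In that case I would restate the conclusion for the corresponding functional (for instance, the difference of group averages of $\kappa^2$), again using Lipschitz continuity, or via bounded-Lipschitz convergence. The first choice is to adopt the uniform version of (A7), under which the proof above is complete.
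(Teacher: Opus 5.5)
Your proof is correct and is essentially the natural argument the paper relies on; the paper's own proof (Supplement S3) is not in the text provided, so this comparison is inferred. You read (A7) in its uniform cross-group form, and the mean value theorem then transfers $\sup_{j\in S_0,k\notin S_0}|\tilde z_j^{\,*}-\tilde z_k^{\,*}|\to 0$ to the $\kappa^2$ differences. Your AM--GM constant $\eta c^2/4$ has the small advantage of being uniform in $(\lambda,\tau)$ and not needing the clipping bound, unlike the compact-range constant $\eta\tau^2\lambda^2e^{\eta C_z}$.

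Your worry about weaker readings of (A7) can be settled. For $\eta>0$ the derivative $\eta c^4A/(c^2+A)^2$ is bounded below by a positive constant on $[-C_z,C_z]$, so $\kappa^2$ is bi-Lipschitz there. Hence the uniform version of (A7) is not only sufficient but necessary for the stated conclusion, and a merely distributional reading would make the proposition false.
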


We next consider the complementary regime in which the guidance statistics asymptotically separate active and inactive coordinates. In contrast to the uninformative case, the guidance component induces systematic shrinkage separation, favoring reduced shrinkage for active variables and increased shrinkage for inactive ones.
%\vspace{-0.4cm}
\begin{proposition}[Guidance-induced shrinkage separation]
\label{prop:separation}
Suppose that Assumption (A8) holds. Fix \(\tau>0\), \(c>0\),
\(\lambda>0\), and \(\eta > 0\). Define
%\vspace{-0.4cm}
\[
\kappa^2(z;\lambda,\tau,c,\eta)
:=
\frac{c^2\tau^2\lambda^2 e^{\eta z}}
{c^2+\tau^2\lambda^2 e^{\eta z}}.
\]

%\vspace{-0.4cm}
\noindent Then
%\vspace{-0.4cm}
\[
\kappa^2(\tilde z_j^{\,*};\lambda,\tau,c,\eta)
>
\kappa^2(\tilde z_k^{\,*};\lambda,\tau,c,\eta)
\quad
\text{for all } j \in S_0,\; k \notin S_0.
\]

%\vspace{-0.4cm}
\noindent Consequently, under informative guidance, the guidance component induces
systematically weaker shrinkage for active variables relative to inactive
variables.
\end{proposition}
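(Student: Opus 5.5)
The argument rests on strict monotonicity of the variance map in $z$, together with the ordering that (A8) imposes on the guidance values. I will read (A8) in its formal form: there exist constants $\delta>0$ and $\zeta$ such that $\tilde z_j^{\,*} \ge \zeta + \delta/2$ for all $j\in S_0$ and $\tilde z_k^{\,*} \le \zeta - \delta/2$ for all $k\notin S_0$. In particular,
\[
\min_{j\in S_0}\tilde z_j^{\,*} > \max_{k\notin S_0}\tilde z_k^{\,*}.
\]
Any version of (A8) that delivers a strict ordering between active and inactive guidance values is enough for the argument.

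First, write $\kappa^2(z;\lambda,\tau,c,\eta) = g(a e^{\eta z})$, where $a=\tau^2\lambda^2>0$ and
\[
g(u) = \frac{c^2 u}{c^2+u}, \qquad u>0.
\]
Differentiating gives $g'(u) = c^4/(c^2+u)^2 > 0$, so $g$ is strictly increasing on $(0,\infty)$. Since $\eta>0$, the map $z\mapsto a e^{\eta z}$ is also strictly increasing. The composition $z\mapsto \kappa^2(z;\lambda,\tau,c,\eta)$ is therefore strictly increasing on $\mathbb{R}$. The same conclusion follows directly from
\[
\frac{\partial \kappa^2}{\partial z} = \frac{\eta c^4 a e^{\eta z}}{(c^2+ae^{\eta z})^2} > 0 .
\]

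Second, fix $j\in S_0$ and $k\notin S_0$. By (A8), $\tilde z_j^{\,*} > \tilde z_k^{\,*}$, and strict monotonicity gives the claimed strict inequality. Because $j$ and $k$ were arbitrary, the inequality holds uniformly over all such pairs.

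A quantitative version follows from the mean value theorem. On $[-C_z,C_z]$, which contains all guidance values by (A4), the derivative is bounded below by a positive constant depending only on $(\lambda,\tau,c,\eta,C_z)$. This gives the uniform gap
\[
\inf_{j\in S_0,\,k\notin S_0}\bigl(\kappa^2(\tilde z_j^{\,*})-\kappa^2(\tilde z_k^{\,*})\bigr) \ge \delta \cdot \min_{|z|\le C_z}\partial_z\kappa^2 > 0,
\]
which shows the separation does not degenerate as $p$ grows. The main obstacle is not the calculus. It is pinning down the precise form of (A8), so that "fixed separation" really gives a strict ordering of every active value over every inactive value, and not merely separation on average. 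If (A8) holds only with high probability, the same argument applies on that event.
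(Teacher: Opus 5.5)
Your proof is correct and follows the same route as the paper. For $\eta>0$ the map $z\mapsto\kappa^2(z;\lambda,\tau,c,\eta)$ is strictly increasing, since its derivative $\eta c^4\tau^2\lambda^2 e^{\eta z}/(c^2+\tau^2\lambda^2 e^{\eta z})^2$ is positive, and (A8) supplies the strict ordering $\tilde z_j^{\,*}>\tilde z_k^{\,*}$ for every $j\in S_0$, $k\notin S_0$; your mean-value-theorem gap, uniform in $p$ and based on the clipping bound $|\tilde z_j^{\,*}|\le C_z$, is a valid strengthening that the proposition does not require.
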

%\vspace{-0.2cm}
\noindent This separation property motivates the active-set construction analyzed in the next section, where guidance is used to identify a reduced set of candidate variables.
%%%%%%%%%%%%%%%%%%%%%%%%%%%%%%%%%%%%%%%%%%%%%%%%%%%%%%%%%%
%\vspace{-0.6cm}
\subsection{Active-set posterior: screening and contraction}\label{subsec:BUGS_active}%\vspace{-0.3cm}
We study the statistical properties of the active-set approximation introduced in Section~\ref{subsec:BUGS_active_sampling}. Let $A_n \subset \{1,\dots,p\}$ denote a data-dependent active set constructed from $(y,X,\tilde z^{\,*})$, and define the corresponding restricted parameter space
%\vspace{-0.4cm}
\[
\Theta(A_n) := \{\beta \in \mathbb R^p : \beta_j = 0 \text{ for } j \notin A_n\}.
\]

%\vspace{-0.4cm}
\noindent The \emph{active posterior} is given by
%\vspace{-0.4cm}
\[
\Pi_{A_n}(\cdot \mid y,\tilde z^{\,*})
:=
\Pi\!\left(\cdot \,\middle|\, y,\tilde z^{\,*},\ \beta_j = 0 \text{ for } j \notin A_n\right).
\]

%\vspace{-0.4cm}
\noindent Our goal is to show that this restriction retains the true support with high probability and preserves posterior contraction around $\beta_0$.

We impose the following conditions on the active-set construction. Let $\hat\beta^{\,\mathrm{init}}$ denote a generic screening statistic based on $(y,X)$, and let $t_n,u_n$ denote threshold sequences.
%\vspace{-0.4cm}
\begin{itemize}
\item[(B1)] \textit{Screening rule.}
The active set satisfies
%\vspace{-0.4cm}
\[
A_n \supseteq \{j : |\hat\beta^{\,\mathrm{init}}_j| > t_n\}
\cup
\{j : \tilde z_j^{\,*} > u_n\}.
\]
%\vspace{-0.4cm}
\item[(B2)] \textit{Signal detection.}
There exist sequences $r_n,q_n$ such that
%\vspace{-0.4cm}
\[
P\!\left(\min_{j \in S_0} |\hat\beta^{\,\mathrm{init}}_j| \ge r_n\right) \to 1,
\qquad
P\!\left(\min_{j \in S_0} \tilde z_j^{\,*} \ge q_n\right) \to 1,
\]

%\vspace{-0.4cm}
\noindent ensuring separation between signal and noise variables.
%\vspace{-0.4cm}
\item[(B3)] \textit{Active-set size control.}
There exists a sequence $L_n$, growing at most polylogarithmically in $p$, such that
%\vspace{-0.4cm}
\[
P\!\left(|A_n| \le C s_0 L_n\right)\to1.
\]

%\vspace{-0.6cm}
\item[(B4)] \textit{Uniform testing over admissible active sets.}
Let $m_n = C s_0 L_n$ and $\bar\epsilon_n^2 = (s_0\log m_n)/n$. For sufficiently large $M$, there exists a test $\phi_n$ such that
%\vspace{-0.4cm}
\[
P_{\beta_0}(\phi_n)\to 0,
\quad
\sup_{\substack{A\supseteq S_0,\ |A|\le m_n}}
\sup_{\substack{\beta\in\Theta(A):\\ \|\beta-\beta_0\|_2 \ge M\bar\epsilon_n}}
P_\beta(1-\phi_n)
\le
\exp(-c_T(M)n\bar\epsilon_n^2).
\]
\end{itemize}

%\vspace{-0.4cm}
Assumptions (B1)--(B3) formalize a screening regime in which the active set captures all relevant variables while controlling its size. Assumption (B4) ensures that uniform testing arguments extend over admissible active subspaces. These conditions are standard in analyses of screening procedures and sure screening properties \citep{fan2008sis}, and are used here to characterize an idealized active-set mechanism. A complete formal specification of these assumptions and their verification is provided in Supplementary Material S3.

In the practical BUGS-Active algorithm, the active set is constructed using quantities derived from current posterior samples, such as coefficient magnitudes and guidance scores. While these do not correspond to any fixed screening estimator, they are expected to exhibit analogous separation behavior once the posterior concentrates around the true sparse signal. Accordingly, Assumptions (B1)--(B2) describe an idealized screening regime that the algorithm approximates, rather than properties of a specific estimator. Assumption (B3) controls the size of the active set, ensuring that the resulting dimension reduction is nontrivial and that the active posterior contracts at a reduced-dimensional rate. Assumption (B4) extends standard testing conditions to the active-set setting, requiring uniform distinguishability of alternatives over admissible subspaces of controlled size. This reflects that the active posterior is supported on a random reduced subspace rather than the full parameter space.

The following result formalizes the corresponding sure-screening property; see \citet{fan2008sis} for related results in high-dimensional screening.
%\vspace{-0.4cm}
\begin{proposition}[Sure screening of the active set]
\label{prop:screening}
Suppose that Assumptions (B1) and (B2) hold. Then
%\vspace{-0.4cm}
\[
P_{\beta_0}\!\left(S_0 \subseteq A_n\right) \to 1.
\]
\end{proposition}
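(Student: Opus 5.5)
The argument is a union-bound argument on a single high-probability event. One subtlety must be handled first. As written, (B2) only gives lower bounds $r_n, q_n$ for the signal statistics. (B1) only guarantees inclusion of coordinates exceeding $t_n$ or $u_n$. For the two to connect, the thresholds must lie below the detection levels. I will therefore read (B2), as the phrase ``ensuring separation'' and its formal version in Supplementary Material S3 indicate, as coupled to the thresholds of (B1): $r_n > t_n$ or $q_n > u_n$ (at least one of these holds for all large $n$). Making this coupling explicit is the only real obstacle. Once it is in place the proof is elementary.

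Define the events
\[
E_{1,n} = \Bigl\{\min_{j\in S_0} |\hat\beta^{\,\mathrm{init}}_j| \ge r_n\Bigr\},
\qquad
E_{2,n} = \Bigl\{\min_{j\in S_0} \tilde z_j^{\,*} \ge q_n\Bigr\}.
\]
By (B2), $P_{\beta_0}(E_{1,n})\to 1$ and $P_{\beta_0}(E_{2,n})\to 1$.

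Suppose $r_n > t_n$. On $E_{1,n}$, every $j\in S_0$ satisfies $|\hat\beta^{\,\mathrm{init}}_j| \ge r_n > t_n$. Hence
\[
S_0\subseteq \{j : |\hat\beta^{\,\mathrm{init}}_j| > t_n\} \subseteq A_n
\]
by (B1), so $\{S_0\subseteq A_n\}\supseteq E_{1,n}$ and
\[
P_{\beta_0}(S_0\subseteq A_n) \ge P_{\beta_0}(E_{1,n}) \to 1 .
\]
If instead $q_n > u_n$, the same argument applies to the guidance component: on $E_{2,n}$, every $j\in S_0$ lies in $\{j: \tilde z_j^{\,*} > u_n\}\subseteq A_n$, which gives the conclusion from $P_{\beta_0}(E_{2,n})\to1$.

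A more robust version combines both. On $E_{1,n}\cap E_{2,n}$, each $j\in S_0$ enters $A_n$ through whichever branch of the union in (B1) is active. The union bound gives
\[
P_{\beta_0}(E_{1,n}^c \cup E_{2,n}^c) \le P_{\beta_0}(E_{1,n}^c) + P_{\beta_0}(E_{2,n}^c) \to 0,
\]
so the conclusion holds for every $n$, whichever inequality between thresholds and detection levels applies at that $n$.

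The argument uses only the inclusion in (B1) and never the size control (B3), so $S_0$ is contained in $A_n$ regardless of how many noise coordinates also enter. The statement is conditional on $\tilde z^{\,*}$, or more generally allows $\tilde z^{\,*}$ to be random. In either case the probability in (B2) for the guidance component is interpreted under the same law as $P_{\beta_0}$, so no extra argument is needed. Strict inequality at $r_n = t_n$ is avoided by requiring eventual strict separation, or by replacing $t_n$ with $t_n - \delta_n$ for some $\delta_n \downarrow 0$.
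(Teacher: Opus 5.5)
Your argument is correct and is essentially the paper's: on the event from (B2) that every $j\in S_0$ clears its detection level, the separation of detection levels from screening thresholds ($r_n>t_n$, resp.\ $q_n>u_n$, which the formal version of (B2) requires) places each such $j$ in a set that (B1) forces into $A_n$, so $P_{\beta_0}(S_0\subseteq A_n)$ is bounded below by a probability tending to one. You are right that this threshold coupling is the only real content and that the main-text wording of (B2) omits it; the one thing to drop is your closing suggestion of replacing $t_n$ by $t_n-\delta_n$, which would strengthen (B1) rather than follow from it, so rely on the strict separation instead.
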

%\vspace{-0.4cm}
%%%%%%%
The following lemma establishes a uniform sieve condition over admissible active subspaces, which, together with the screening property, enables control of posterior mass on reduced-dimensional parameter spaces.
%\vspace{-0.4cm}
\begin{lemma}[Uniform sieve condition on admissible active subspaces]
\label{lem:active_sieve}
Consider the Gaussian linear model with known variance \(\sigma_0^2\). Suppose that Assumptions (A1)--(A6), (A9), and (A11) hold. Let
%\vspace{-0.4cm}
\[
m_n := C s_0 L_n,
\qquad
\bar\epsilon_n^2 := \frac{s_0\log m_n}{n},
\]

%\vspace{-0.5cm}
\noindent where \(C\) and \(L_n\) are as in Assumption (B3), and define
%\vspace{-0.5cm}
\[
\mathcal A_n
:=
\Bigl\{
A \subseteq \{1,\dots,p\} :
S_0 \subseteq A,\;
|A|\le m_n
\Bigr\}.
\]

%\vspace{-0.5cm}
\noindent For each \(A\in\mathcal A_n\), let
%\vspace{-0.5cm}
\[
\Theta(A)
:=
\{\beta\in\mathbb R^p:\beta_j=0\ \text{for all } j\notin A\},
\]

%\vspace{-0.5cm}
\noindent and let \(\Pi_A(\cdot\mid \tilde z^{\,*})\) denote the induced guided prior on \(\Theta(A)\), obtained by restricting the hierarchical prior to the coordinates in \(A\) and fixing \(\beta_{A^c}=0\). Then there exists a constant \(C_F>0\) such that, for each \(A\in\mathcal A_n\), there is a measurable set \(\mathcal F_{n,A}\subseteq \Theta(A)\) satisfying
%\vspace{-0.5cm}
\[
\Pi_A(\mathcal F_{n,A}^c\mid \tilde z^{\,*})
\le
\exp(-C_F n\bar\epsilon_n^2)
\]

%\vspace{-0.5cm}
\noindent for all sufficiently large \(n\).
\end{lemma}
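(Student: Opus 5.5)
We will construct the sieve explicitly as a norm-and-sparsity set and bound its complement through the marginal prior tails of each coordinate. For each $A\in\mathcal A_n$ take
\[
\mathcal F_{n,A}
:=
\Bigl\{\beta\in\Theta(A):\ \#\{j\in A: |\beta_j|>\delta_n\}\le \bar s_n,\ \ \|\beta\|_\infty\le H_n\Bigr\},
\]
with $\delta_n := \bar\epsilon_n/m_n$ (or a smaller polynomial in $1/m_n$), $\bar s_n := K s_0$ for a large constant $K$, and $H_n := \exp(K' n\bar\epsilon_n^2)$ or a polynomial in $m_n$.

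The union bound then gives
\[
\Pi_A(\mathcal F_{n,A}^c\mid\tilde z^{\,*})\le \Pi_A(\#\{j:|\beta_j|>\delta_n\}>\bar s_n)+\sum_{j\in A}\Pi_A(|\beta_j|>H_n),
\]
and the two terms are handled separately.

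\emph{Bounded guidance.} The first step uses (A4): conditionally on $(\lambda_j,\tau,c,\eta)$ we have $\tilde\kappa_j^2\le \min\{c^2,\ \tau^2\lambda_j^2 e^{\eta C_z}\}$. On the event $\{\eta\le \eta_n\}$ the guidance factor is at most $e^{\eta_n C_z}$. Since $\eta\sim N^+(0,\sigma_\eta^2)$, the event $\{\eta>\eta_n\}$ has probability $\le e^{-\eta_n^2/(2\sigma_\eta^2)}$. Choosing $\eta_n\asymp\sqrt{n\bar\epsilon_n^2}$ makes this $\exp(-Cn\bar\epsilon_n^2)$ while keeping $e^{\eta_n C_z}$ subexponential relative to the other bounds. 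Likewise we truncate $\tau\le\tau_n$, using the half-Cauchy tail for $\tau$: the cost is $\tau_0/\tau_n$, which is only polynomial. To make it exponentially small we instead choose $\tau_n$ tiny on the event of interest, handled as follows.

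\emph{Sparsity term.} The standard horseshoe argument (van der Pas et al.) conditions on $\tau$ and uses that, given $\tau$ and $\eta\le\eta_n$, the events $\{|\beta_j|>\delta_n\}$ are independent across $j\in A$. Each has probability
\[
p_n(\tau)\lesssim \frac{\tau e^{\eta_n C_z/2}}{\delta_n}\log(1/\tau)
\]
from the half-Cauchy local scale, with the slab only helping. For $\tau\le (s_0/m_n)^{K_2}$ this gives $p_n\lesssim (s_0/m_n)^{K_2/2}$ up to factors absorbed via $\delta_n$. A binomial Chernoff bound then yields
\[
P(\mathrm{Bin}(|A|,p_n)>\bar s_n)\le (e|A|p_n/\bar s_n)^{\bar s_n}\le \exp(-c K s_0\log m_n).
\]
The mass of $\{\tau>(s_0/m_n)^{K_2}\}$ under $C^+(0,\tau_0)$ is only polynomially small. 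This is \textbf{the main obstacle}: one must rely, as in the cited contraction literature, on either $\tau_0$ scaling like $(s_0/p)^{K}$ (absorbed in (A6)/(A9)) or a truncation of $\tau$'s hyperprior. I would try first to invoke (A6)/(A9) to place $\tau$ with prior mass concentrated so that $\Pi(\tau>(s_0/m_n)^{K_2})\le e^{-Cs_0\log m_n}$.

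\emph{Magnitude term.} Here we use $\tilde\kappa_j^2\le c^2$, so $\beta_j\mid c$ is Gaussian with variance $\le\sigma_0^2c^2$. Since $c^2\sim\mathrm{IG}(a_c,b_c)$, this gives $\Pi(|\beta_j|>H_n)\le \Pi(c^2>H_n)+2e^{-H_n/(2\sigma_0^2)}$. The inverse-gamma tail is polynomial, $H_n^{-a_c}$, so taking $H_n=\exp(K'n\bar\epsilon_n^2)$ gives $m_n H_n^{-a_c}\le\exp(-C n\bar\epsilon_n^2)$, using $\log m_n\lesssim n\bar\epsilon_n^2$.

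\emph{Uniformity.} Finally, all constants depend only on $C_z$, the hyperparameters, and $\sigma_0$. None depends on the specific $A$, beyond $|A|\le m_n$. The bound is therefore uniform over $\mathcal A_n$, and we set $C_F$ to the minimum of the exponents obtained.
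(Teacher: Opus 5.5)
Your magnitude argument is correct. The effective-sparsity part of your sieve, however, is a genuine gap, and under the stated hierarchy it cannot be closed.

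You have the half-Cauchy tail backwards. For $\tau\sim C^+(0,\tau_0)$ with fixed $\tau_0$, $\Pi(\tau\le t)=\tfrac{2}{\pi}\arctan(t/\tau_0)$. So it is $\{\tau\le (s_0/m_n)^{K_2}\}$ that is polynomially small, while $\{\tau>(s_0/m_n)^{K_2}\}$ carries almost all the prior mass.

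Worse, the complement of your sparsity constraint is genuinely not small. Consider $\{\tau\ge\tau_0\}$, which has prior probability $1/2$, intersected with a fixed-probability event such as $\{\eta\le 1,\ c^2\ge b\}$. On this event the $\beta_j$, $j\in A$, are conditionally independent, and $\tilde\kappa_j^2$ is bounded below uniformly in terms of $\lambda_j$ by (A4). Hence $P(|\beta_j|\le\delta_n)\to 0$ because $\delta_n\to 0$, so $\#\{j\in A:|\beta_j|>\delta_n\}$ is close to $|A|$. Since (B3) allows $L_n\to\infty$, $\mathcal A_n$ contains sets with $|A|=m_n=Cs_0L_n\gg Ks_0$. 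For such sets $\Pi_A(\#\{j:|\beta_j|>\delta_n\}>Ks_0)$ stays bounded away from zero for every fixed $K$. That is incompatible with a bound $\exp(-C_F n\bar\epsilon_n^2)$ once $n\bar\epsilon_n^2=s_0\log m_n\to\infty$.

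Appealing to (A6)/(A9) cannot rescue this. (A6) is a \emph{lower} bound on prior mass near admissible values and gives no upper tail control on $\tau$; (A9) only relates $n$ and $p$. Even conditionally on small $\tau$, your bookkeeping does not close: $\tau/\delta_n$ carries a factor $1/\bar\epsilon_n$, a power of $n$, which a decay polylogarithmic in $m_n$ cannot absorb.

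The repair is to drop the sparsity constraint entirely. The lemma only asks that $\mathcal F_{n,A}^c$ have exponentially small prior mass and imposes no entropy requirement; as literally stated, even $\mathcal F_{n,A}=\Theta(A)$ qualifies. Dimension reduction is already encoded by $|A|\le m_n$, and the tests are supplied directly by (B4) uniformly over $\Theta(A)$.

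Take $\mathcal F_{n,A}=\{\beta\in\Theta(A):\|\beta\|_\infty\le H_n\}$ with $H_n=\exp(K'n\bar\epsilon_n^2)$. Your slab argument then gives the bound with constants depending only on $(a_c,b_c,\sigma_0)$, hence uniformly over $\mathcal A_n$. Its ingredients are:
\begin{itemize}
\item $\tilde\kappa_j^2\le c^2$, irrespective of $\eta$ and $\tilde z^{\,*}$;
\item the inverse-gamma tail $\Pi(c^2>H_n)\lesssim H_n^{-a_c}$;
\item the Gaussian tail on $\{c^2\le H_n\}$, summed over at most $m_n$ coordinates;
\item $\log m_n\le n\bar\epsilon_n^2$.
\end{itemize}
The truncations of $\eta$ and $\tau$ become unnecessary. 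Keep the exponential choice of $H_n$: a polynomial in $m_n$ only yields a tail $m_n^{-O(1)}$, which is not below $m_n^{-C_F s_0}$ when $s_0\to\infty$.
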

%%%%%%%
The following lemma provides a lower bound on the restricted marginal likelihood, complementing the sieve control above and ensuring sufficient posterior mass near the truth.
\begin{lemma}[Uniform denominator lower bound]
\label{lem:active_denom}
Assume the Gaussian linear model with known variance \(\sigma_0^2\). Suppose that Assumptions (A1) and (A11) hold, and that the prior concentration result in Theorem~\ref{thm:prior_mass} applies uniformly over admissible active sets \(A\in\mathcal A_n\). Then there exists a constant \(C_D>0\) such that, for every \(\delta>0\), there are events \(\Omega_{n,A}\) satisfying
%\vspace{-0.4cm}
\[
\sup_{A\in\mathcal A_n} P_{\beta_0}(\Omega_{n,A}^c)\to 0,
\]

%\vspace{-0.4cm}
\noindent and on \(\Omega_{n,A}\),
%\vspace{-0.4cm}
\[
D_{n,A}
\ge
\exp \bigl(-(C_D+\delta)n\bar\epsilon_n^2\bigr),
\]

%\vspace{-0.4cm}
\noindent where
%\vspace{-0.4cm}
\[
D_{n,A}
:=
\int_{\Theta(A)}
\frac{f_\beta(y)}{f_{\beta_0}(y)}
\,d\Pi_A(\beta\mid \tilde z^{\,*}).
\]
\end{lemma}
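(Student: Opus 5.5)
The plan follows the classical Ghosal--Ghosh--van der Vaart denominator argument, applied separately for each admissible \(A\in\mathcal A_n\). For fixed \(A\), let
\[
B_{n,A}:=\{\beta\in\Theta(A):\|\beta-\beta_0\|_2\le \bar\epsilon_n\},
\]
and use \(D_{n,A}\ge \int_{B_{n,A}} (f_\beta/f_{\beta_0})(y)\,d\Pi_A(\beta\mid\tilde z^{\,*})\).

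\textbf{Step 1: KL neighbourhood.} Under the Gaussian model with known \(\sigma_0^2\), the log-likelihood ratio is
\[
\log\frac{f_\beta}{f_{\beta_0}}(y)=\frac{1}{\sigma_0^2}\varepsilon^\top X(\beta-\beta_0)-\frac{1}{2\sigma_0^2}\|X(\beta-\beta_0)\|_2^2 .
\]
By (A11), \(\|X(\beta-\beta_0)\|_2^2\le C_X n\|\beta-\beta_0\|_2^2\). Hence \(B_{n,A}\) is contained in a Kullback--Leibler neighbourhood. Both the KL divergence and the variance of the log-likelihood ratio are bounded by a constant times \(n\bar\epsilon_n^2\).

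\textbf{Step 2: prior mass.} By hypothesis, Theorem~\ref{thm:prior_mass} applies uniformly over \(\mathcal A_n\). Its proof only uses the coordinates in \(S_0\subseteq A\) together with the shrinkage of at most \(|A|\le m_n\) null coordinates. I would read it with \(p\) replaced by \(m_n\), which gives
\[
\Pi_A(B_{n,A}\mid\tilde z^{\,*})\ge \exp(-C_1 s_0\log m_n)=\exp(-C_1 n\bar\epsilon_n^2).
\]

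\textbf{Step 3: the event \(\Omega_{n,A}\).} Let \(\bar\Pi_A\) be \(\Pi_A\) renormalized to \(B_{n,A}\). By Jensen's inequality,
\[
\log D_{n,A}\ge \log\Pi_A(B_{n,A})+\int \log\frac{f_\beta}{f_{\beta_0}}\,d\bar\Pi_A .
\]
The integral equals \(\sigma_0^{-2}\varepsilon^\top X(\bar\beta_A-\beta_0)-\tfrac{1}{2\sigma_0^2}\int\|X(\beta-\beta_0)\|_2^2\,d\bar\Pi_A\), where \(\bar\beta_A\) is the \(\bar\Pi_A\)-mean. The quadratic term is at most \(C_X n\bar\epsilon_n^2/2\) in absolute value. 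The linear term is a centred Gaussian with variance at most \(\sigma_0^{-2}C_X n\bar\epsilon_n^2\).

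Define
\[
\Omega_{n,A}:=\bigl\{\sigma_0^{-2}\varepsilon^\top X(\bar\beta_A-\beta_0)\ge -\delta n\bar\epsilon_n^2\bigr\}.
\]
By Chebyshev's inequality, or a Gaussian tail bound,
\[
P_{\beta_0}(\Omega_{n,A}^c)\le \frac{C_X}{\sigma_0^2\delta^2 n\bar\epsilon_n^2}.
\]
This bound does not depend on \(A\), and it tends to \(0\) because \(n\bar\epsilon_n^2=s_0\log m_n\to\infty\). That divergence follows from (A1) once \(m_n\) grows, which is guaranteed if \(L_n\to\infty\) or \(s_0\to\infty\); I will note this as an implicit requirement. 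Combining the three pieces on \(\Omega_{n,A}\) gives the claim with \(C_D=C_1+C_X/(2\sigma_0^2)\).

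\textbf{Main obstacle.} The delicate point is Step 2: transferring the prior-mass bound from dimension \(p\) to \(|A|\le m_n\) with a constant uniform in \(A\). I would handle it by invoking the uniformity hypothesis directly. Concretely, I would check that the small-ball estimates for the null coordinates in \(A\setminus S_0\) depend only on \(|A|\) and on the bounded guidance factor \(e^{\eta C_z}\) from (A4). I would also check that the \(\tau\)-window used in the proof of Theorem~\ref{thm:prior_mass} can be rescaled to \((s_0/m_n)\)-type values with only polynomial cost in \(m_n\). The supremum over \(A\) in the probability statement is then automatic, since the Chebyshev bound is free of \(A\).
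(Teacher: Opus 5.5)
Your argument is correct and follows essentially the paper's route. You restrict to an $\ell_2$-ball in $\Theta(A)$, which is a Kullback--Leibler-type neighbourhood by (A11), lower-bound its mass via the assumed uniform prior concentration, and apply the Ghosal--Ghosh--van der Vaart Jensen-plus-Chebyshev step, whose failure probability $C_X/(\sigma_0^2\delta^2 n\bar\epsilon_n^2)$ is free of $A$ and vanishes because $s_0\log m_n\to\infty$, a condition the paper also imposes separately. One caution on your \emph{main obstacle} remark: uniform concentration at rate $\bar\epsilon_n$ is a hypothesis, so you need not re-derive it, and it would not follow just by substituting $m_n$ for $p$ in Theorem~\ref{thm:prior_mass}. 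For signals of fixed size, the $s_0$ active coordinates alone cost about $\tfrac{s_0}{2}\log(n/\log m_n)$ in log prior mass, which is $O(s_0\log m_n)$ only under an analogue of (A9) such as $\log n=O(\log m_n)$.
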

Lemma~\ref{lem:active_denom} provides a denominator lower bound uniformly over admissible active sets. In the contraction result below, this bound is applied to the realized active set \(A_n\). On the high-probability event that \(A_n \in \mathcal A_n\), it yields the corresponding lower bound for the active posterior. For theoretical tractability, we analyze the active posterior conditional on a screening \(\sigma\)-field generated by an auxiliary sample independent of the likelihood sample. Conditional on this \(\sigma\)-field, the active set \(A_n\) and the guidance vector \(\tilde z^{\,*}\) are fixed.

The proof combines four ingredients: the sure-screening property in Proposition~\ref{prop:screening}, the active-set size control in Assumption (B3), the sieve bound in Lemma~\ref{lem:active_sieve}, and the denominator control in Lemma~\ref{lem:active_denom}. Because the posterior is supported on a random reduced subspace, the denominator bound plays the role of a restricted-space analogue of the usual prior-mass condition. As in standard posterior contraction arguments, we require that the testing and sieve exponents dominate the denominator exponent, i.e.,
%\vspace{-0.5cm}
\[
\min\{c_T(M),\,C_F\} > C_D
\]

%\vspace{-0.5cm}
\noindent for all sufficiently large \(M>0\). We further require that
%\vspace{-0.6cm}
\[
s_0 \log m_n \to \infty,
\]

%\vspace{-0.5cm}
\noindent so that \(n\bar\epsilon_n^2 \to \infty\).
%\vspace{-0.4cm}
\begin{theorem}[Posterior contraction under the active posterior]
\label{thm:active_contraction}
Consider the Gaussian linear model with known variance \(\sigma_0^2\). Suppose that Assumptions (A1)--(A6), (A9), (A11), and (B1)--(B4) hold. Let
%\vspace{-0.5cm}
\[
m_n := C s_0 L_n,
\qquad
\bar\epsilon_n^2 := \frac{s_0\log m_n}{n},
\]

%\vspace{-0.5cm}
\noindent where \(C\) and \(L_n\) are as in Assumption (B3). Then, for all sufficiently large \(M>0\),
%\vspace{-0.4cm}
\[
\Pi_{A_n}\!\left(
\|\beta-\beta_0\|_2 \ge M\bar\epsilon_n
\,\middle|\,
y,\tilde z^{\,*}
\right)
\to 0
\qquad\text{in } P_{\beta_0}\text{-probability}.
\]
\end{theorem}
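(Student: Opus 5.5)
We follow the standard Ghosal--Ghosh--van der Vaart decomposition, applied to the random reduced subspace. We work conditionally on the screening $\sigma$-field, so that $A_n$ and $\tilde z^{\,*}$ are fixed. Define the good event
\[
E_n := \{S_0\subseteq A_n\}\cap\{|A_n|\le m_n\} = \{A_n\in\mathcal A_n\}.
\]
Proposition~\ref{prop:screening} gives $P_{\beta_0}(S_0\subseteq A_n)\to1$, and (B3) gives $P_{\beta_0}(|A_n|\le m_n)\to1$. Hence $P_{\beta_0}(E_n)\to1$, and it suffices to show that $E_{\beta_0}\bigl[\Pi_{A_n}(U_n\mid y,\tilde z^{\,*})\mathbf 1_{E_n}\bigr]\to0$, where $U_n:=\{\beta\in\Theta(A_n):\|\beta-\beta_0\|_2\ge M\bar\epsilon_n\}$. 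On $E_n$, $\beta_0\in\Theta(A_n)$, so the active posterior is a genuine posterior under the prior $\Pi_{A_n}$ on a model containing the truth. We can therefore write
\[
\Pi_{A_n}(U_n\mid y,\tilde z^{\,*}) = \frac{N_{n,A_n}(U_n)}{D_{n,A_n}}, \qquad N_{n,A}(B)=\int_B \frac{f_\beta(y)}{f_{\beta_0}(y)}\,d\Pi_A(\beta\mid\tilde z^{\,*}).
\]

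We bound this ratio using the test $\phi_n$ from (B4), the sieve $\mathcal F_{n,A_n}$ from Lemma~\ref{lem:active_sieve}, and the event $\Omega_{n,A_n}$ from Lemma~\ref{lem:active_denom}:
\[
\Pi_{A_n}(U_n\mid y)\mathbf 1_{E_n}
\le \phi_n + \mathbf 1_{E_n\cap\Omega_{n,A_n}^c}
+ \mathbf 1_{E_n\cap\Omega_{n,A_n}}(1-\phi_n)\,\frac{N_{n,A_n}(U_n\cap\mathcal F_{n,A_n})+N_{n,A_n}(\mathcal F_{n,A_n}^c)}{D_{n,A_n}}.
\]
We take expectations term by term.

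\textbf{First term.} $E\phi_n\to0$ by (B4).

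\textbf{Second term.} Because $A_n$ is fixed given the auxiliary $\sigma$-field, $P(E_n\cap\Omega_{n,A_n}^c)\le\sup_{A\in\mathcal A_n}P(\Omega_{n,A}^c)\to0$.

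\textbf{Third term.} On $\Omega_{n,A_n}$ we replace $D_{n,A_n}^{-1}$ by $\exp((C_D+\delta)n\bar\epsilon_n^2)$. Fubini and $E_{\beta_0}[f_\beta/f_{\beta_0}\,g]=E_\beta g$ then give
\[
E\bigl[(1-\phi_n)N_{n,A}(U_n\cap\mathcal F_{n,A})\bigr]\le\sup_{\beta\in U_n}P_\beta(1-\phi_n)\le e^{-c_T(M)n\bar\epsilon_n^2}.
\]
The supremum here is taken uniformly over $A\in\mathcal A_n$, which is exactly what (B4) supplies. Similarly,
\[
E\,N_{n,A}(\mathcal F_{n,A}^c)\le\Pi_A(\mathcal F^c_{n,A})\le e^{-C_F n\bar\epsilon_n^2}
\]
by Lemma~\ref{lem:active_sieve}. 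The third term is therefore at most
\[
e^{(C_D+\delta)n\bar\epsilon_n^2}\bigl(e^{-c_T(M)n\bar\epsilon_n^2}+e^{-C_Fn\bar\epsilon_n^2}\bigr).
\]
Choose $\delta$ small enough that $\min\{c_T(M),C_F\}>C_D+\delta$. Since $n\bar\epsilon_n^2=s_0\log m_n\to\infty$, this bound tends to zero.

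\textbf{Main obstacle.} The delicate step is the measurability and independence bookkeeping: the random $A_n$ must be plugged into bounds proved for fixed $A$. We handle this by conditioning on the auxiliary screening $\sigma$-field, so that $A_n$ is independent of $y$. Each bound above then holds conditionally, with constants uniform over $\mathcal A_n$, and we integrate out. A secondary point is checking that the sieve exponent $C_F$ and the denominator exponent $C_D$ are compatible, which the standing requirement $\min\{c_T(M),C_F\}>C_D$ supplies.
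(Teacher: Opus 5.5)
Your proposal is correct and follows the paper's own argument. You condition on the auxiliary screening $\sigma$-field and reduce to the event $\{A_n\in\mathcal A_n\}$ via Proposition~\ref{prop:screening} and (B3). You then combine the uniform test from (B4), the sieve bound of Lemma~\ref{lem:active_sieve}, and the denominator bound of Lemma~\ref{lem:active_denom}, closing with $\min\{c_T(M),C_F\}>C_D$ and $n\bar\epsilon_n^2\to\infty$. Your write-up makes the paper's sketch explicit, including the step of passing from fixed $A$ to the random $A_n$ by independence. That step is harmless because $A_n$ ranges over a finite collection.
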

%\vspace{-0.5cm}
\begin{proof}[Sketch of proof]
The proof proceeds by combining screening, sieve, testing, and denominator control arguments within the active subspace. Let \(A_n\) denote the data-driven active set. For theoretical tractability, we condition on a screening \(\sigma\)-field under which \(A_n\) and the guidance vector are fixed; equivalently, they may be constructed from an auxiliary sample independent of the likelihood. On the high-probability event that the true support is contained in \(A_n\) and \(|A_n|\) is controlled, the active posterior reduces to a standard posterior on a fixed, low-dimensional subspace.

The posterior mass outside an \(\ell_2\)-ball of radius \(M\bar\epsilon_n\) is decomposed into contributions inside and outside a sieve. The denominator is bounded below using a uniform prior concentration argument over admissible active sets. The numerator over the sieve is controlled via exponentially consistent tests, while the contribution outside the sieve is bounded using prior mass decay. Under the condition that the testing and sieve exponents dominate the denominator exponent, both contributions are exponentially small. Combining these bounds yields that the posterior mass assigned to the complement of the shrinking neighborhood vanishes in probability. 
\end{proof}
Theorem~\ref{thm:active_contraction} shows that the active-set posterior achieves contraction at the same sparsity-dependent rate as the full posterior, while reducing the effective dimensionality of the problem. This provides theoretical justification for the proposed active-set approximation in ultra-high-dimensional settings.

%\vspace{-0.8cm}
\section{Simulation study}\label{sec:sim_study}%\vspace{-0.3cm}
We evaluate the proposed method in terms of estimation, prediction, and variable selection under independent and correlated designs.
%\vspace{-0.5cm}
\paragraph{Scenario 1 (independent design).}
The predictors $X \in \mathbb{R}^{n \times p}$ are generated with independent standard normal entries and standardized column-wise. The regression vector $\beta$ is sparse with $s_0 = 10$ nonzero coefficients in the first ten coordinates, with values
%\vspace{-0.5cm}
\[
(2.5,\,-2.0,\,1.8,\,-1.5,\,1.2,\,1.0,\,-0.9,\,0.8,\,0.7,\,-0.7).
\]

%\vspace{-0.5cm}
\noindent The response is generated as $y = X\beta + \varepsilon$, where $\varepsilon \sim N_n(0, \sigma^2 I_n)$ with $\sigma = 1$.
%\vspace{-0.5cm}
\paragraph{Scenario 2 (correlated design).}
Predictors are generated with a Toeplitz correlation structure with parameter $\rho = 0.5$ and standardized as above. The regression coefficients and noise model are identical to Scenario 1.

We consider high-dimensional settings with $(n,p)$ ranging up to $p \approx 10^6$. Results are averaged over $10$ replications, with standard errors reported in parentheses.

We compare the proposed BUGS method with a range of widely used frequentist and Bayesian approaches for sparse high-dimensional regression. As frequentist baselines, we include the LASSO implemented via the \texttt{glmnet} package \citep{friedman2010glmnet} and the univariate-guided LASSO (UniLASSO) of \citet{chatterjee2025unilasso}, implemented via the \texttt{uniLasso} R package, which incorporates marginal guidance information into penalized estimation. Among Bayesian methods, we consider several global--local shrinkage priors and related models. The Bayesian LASSO of \citet{park2008blasso}, along with the horseshoe \citep{carvalho2010horseshoe} and horseshoe+ \citep{bhadra2017horseshoeplus} priors, are implemented using the MATLAB \texttt{bayesreg} toolbox \citep{bayesreg_matlab}. We also include the Dirichlet--Laplace prior \citep{bhattacharya2015dl} and the R2D2 prior \citep{zhang2022r2d2}, both with publicly available R implementations\footnote{\url{https://github.com/yandorazhang/R2D2}}, as well as the spike-and-slab LASSO (SSLASSO) \citep{rockova2018ssl}, implemented via the \texttt{SSLASSO} R package. To account for recent developments in scalable Bayesian inference, we include an approximate MCMC implementation of a horseshoe-type prior following \citet{johndrow2020scalable}, implemented via the \texttt{Mhorseshoe} R package \citep{kang2025mhorseshoe}. Finally, we evaluate the proposed BUGS model alongside a computationally scalable variant, BUGS-Active, which employs an adaptive active-set strategy tailored to ultra-high-dimensional settings. Across all methods, tuning parameters and hyperparameters are set to the default values provided by their respective implementations. For Bayesian procedures (including BUGS and BUGS-Active), posterior inference is based on $5000$ MCMC iterations with $1000$ burn-in. Variable selection is based on posterior summaries using a common thresholding rule applied uniformly across Bayesian methods, ensuring comparability of selection metrics.

In our experiments, all methods are evaluated in moderate and high-dimensional settings, namely $(p,n) = (200,100)$, $(500,150)$, and $(1000,200)$. As dimensionality increases, computationally intensive methods (e.g., Dirichlet--Laplace and R2D2) become infeasible and are excluded. The proposed BUGS method is evaluated up to $p = 10^4$, beyond which the BUGS-Active variant is used. In ultra-high-dimensional regimes ($p = 10^4$ to $10^6$), LASSO, UniLASSO, and BUGS-Active are evaluated up to $p = 10^5$, while at $p = 10^6$ results are reported only for BUGS-Active due to computational limitations of competing methods. Performance is evaluated in terms of estimation, prediction, variable selection, and computational efficiency. Estimation accuracy is measured by RMSE of $\beta$, and prediction accuracy by MSE of the response. Variable selection is assessed using TPR, FPR, FDR, and MCC. Computational efficiency is measured by runtime (seconds).

\begin{table}[!t]
\renewcommand{\arraystretch}{1.2}
\resizebox{0.99\columnwidth}{!}{%
\begin{tabular}{c|l|ccccccc}
\hline
$(p,n)$ & Method & RMSE ($\beta$) & MSE ($y$) & TPR & FPR & MCC & FDR & \begin{tabular}[c]{@{}c@{}}CompTime\\ (sec)\end{tabular} \\ \hline
\multirow{10}{*}{\begin{tabular}[c]{@{}c@{}}$p =200$ \\ $n =100$\end{tabular}} & LASSO & 0.013 (0.0009) & 0.020 (0.0024) & 1.000 (0.0000) & 0.081 (0.0080) & 0.609 (0.0198) & 0.594 (0.0228) & 0.07 (0.003) \\
 & UniLASSO & 0.015 (0.0011) & 0.045 (0.0052) & 0.960 (0.0221) & 0.041 (0.0053) & 0.720 (0.0285) & 0.428 (0.0399) & 0.05 (0.003) \\
 & BayesLASSO & 0.023 (0.0012) & 0.039 (0.0025) & 1.000 (0.0000) & 0.512 (0.0094) & 0.213 (0.0038) & 0.907 (0.0015) & 1.70 (0.018) \\
 & Dirich-Laplace & 0.013 (0.0021) & 0.031 (0.0086) & 1.000 (0.0000) & 0.002 (0.0009) & \textbf{0.980 (0.0080)} & \textbf{0.036 (0.0148)} & 19.15 (1.687) \\
 & R2D2 & \textbf{0.007 (0.0004)} & 0.012 (0.0013) & 1.000 (0.0000) & 0.044 (0.0057) & 0.732 (0.0234) & 0.436 (0.0324) & 25.26 (1.082) \\
 & SSLASSO & 0.014 (0.0012) & 0.024 (0.0024) & 1.000 (0.0000) & 0.172 (0.0096) & 0.444 (0.0112) & 0.762 (0.0091) & 0.09 (0.024) \\
 & Horseshoe & 0.008 (0.0005) & 0.015 (0.0017) & 1.000 (0.0000) & 0.058 (0.0098) & 0.688 (0.0321) & 0.492 (0.0412) & 1.71 (0.022) \\
 & Horseshoe+ & \textbf{0.007 (0.0004)} & 0.012 (0.0013) & 1.000 (0.0000) & 0.041 (0.0048) & 0.741 (0.0200) & 0.426 (0.0280) & 1.79 (0.030) \\
 & Mhorseshoe & \textbf{0.007 (0.0005)} & \textbf{0.010 (0.0012)} & 1.000 (0.0000) & 0.018 (0.0037) & 0.864 (0.0243) & 0.235 (0.0398) & 17.00 (1.847) \\
 & BUGS & 0.008 (0.0006) & \textbf{0.010 (0.0018)} & 1.000 (0.0000) & 0.002 (0.0008) & \textbf{0.985 (0.0075)} & \textbf{0.027 (0.0139)} & 48.43 (0.153) \\ \hline
\multirow{10}{*}{\begin{tabular}[c]{@{}c@{}}$p =500$ \\ $n =150$\end{tabular}} & LASSO & 0.007 (0.0005) & 0.017 (0.0018) & 1.000 (0.0000) & 0.031 (0.0043) & 0.639 (0.0296) & 0.572 (0.0381) & 0.09 (0.005) \\
 & UniLASSO & 0.006 (0.0005) & 0.021 (0.0029) & 0.990 (0.0100) & 0.014 (0.0026) & 0.779 (0.0328) & 0.369 (0.0524) & 0.07 (0.005) \\
 & BayesLASSO & 0.016 (0.0007) & 0.044 (0.0020) & 1.000 (0.0000) & 0.259 (0.0102) & 0.234 (0.0062) & 0.926 (0.0029) & 8.84 (0.574) \\
 & Dirich-Laplace & 0.006 (0.0011) & 0.017 (0.0072) & 1.000 (0.0000) & 0.000 (0.0003) & \textbf{0.990 (0.0063)} & \textbf{0.018 (0.0121)} & 170.94 (17.476) \\
 & R2D2 & \textbf{0.004 (0.0002)} & 0.015 (0.0011) & 1.000 (0.0000) & 0.014 (0.0017) & 0.777 (0.0235) & 0.384 (0.0381) & 165.82 (16.126) \\
 & SSLASSO & 0.009 (0.0004) & 0.031 (0.0020) & 1.000 (0.0000) & 0.094 (0.0033) & 0.404 (0.0072) & 0.820 (0.0059) & 0.32 (0.040) \\
 & Horseshoe & 0.005 (0.0005) & 0.020 (0.0042) & 1.000 (0.0000) & 0.025 (0.0075) & 0.715 (0.0478) & 0.459 (0.0631) & 9.78 (0.125) \\
 & Horseshoe+ & \textbf{0.004 (0.0003)} & \textbf{0.013 (0.0018)} & 1.000 (0.0000) & 0.014 (0.0030) & 0.780 (0.0326) & 0.375 (0.0495) & 9.81 (0.033) \\
 & Mhorseshoe & \textbf{0.003 (0.0002)} & \textbf{0.008 (0.0008)} & 1.000 (0.0000) & 0.004 (0.0011) & 0.910 (0.0211) & 0.165 (0.0376) & 66.39 (6.414) \\
 & BUGS & \textbf{0.004 (0.0015)} & 0.020 (0.0158) & 1.000 (0.0000) & 0.000 (0.0002) & \textbf{0.995 (0.0048)} & \textbf{0.009 (0.0091)} & 120.77 (0.468) \\ \hline
\multirow{10}{*}{\begin{tabular}[c]{@{}c@{}}$p =1,000$ \\ $n =200$\end{tabular}} & LASSO & 0.005 (0.0002) & 0.015 (0.0013) & 1.000 (0.0000) & 0.014 (0.0026) & 0.665 (0.0353) & 0.541 (0.0470) & 0.15 (0.011) \\
 & UniLASSO & 0.004 (0.0005) & 0.019 (0.0035) & 0.970 (0.0153) & 0.008 (0.0011) & 0.737 (0.0251) & 0.432 (0.0342) & 0.12 (0.003) \\
 & BayesLASSO & 0.014 (0.0006) & 0.044 (0.0017) & 1.000 (0.0000) & 0.113 (0.0078) & 0.273 (0.0089) & 0.915 (0.0047) & 14.02 (0.310) \\
 & Dirich-Laplace & 0.004 (0.0013) & 0.030 (0.0166) & 1.000 (0.0000) & 0.000 (0.0000) & \textbf{1.000 (0.0000)} & \textbf{0.000 (0.0000)} & 895.74 (7.683) \\
 & R2D2 & 0.003 (0.0001) & 0.015 (0.0010) & 1.000 (0.0000) & 0.004 (0.0011) & 0.845 (0.0287) & 0.275 (0.0475) & 894.96 (5.556) \\
 & SSLASSO & 0.006 (0.0002) & 0.032 (0.0014) & 1.000 (0.0000) & 0.055 (0.0023) & 0.386 (0.0069) & 0.842 (0.0052) & 0.91 (0.094) \\
 & Horseshoe & 0.003 (0.0004) & 0.017 (0.0048) & 1.000 (0.0000) & 0.009 (0.0032) & 0.791 (0.0532) & 0.345 (0.0781) & 14.81 (0.268) \\
 & Horseshoe+ & 0.004 (0.0002) & 0.024 (0.0028) & 1.000 (0.0000) & 0.021 (0.0029) & 0.584 (0.0309) & 0.644 (0.0372) & 15.35 (0.246) \\
 & Mhorseshoe & \textbf{0.002 (0.0001)} & \textbf{0.006 (0.0005)} & 1.000 (0.0000) & 0.002 (0.0005) & 0.938 (0.0210) & 0.114 (0.0382) & 199.68 (16.679) \\
 & BUGS & \textbf{0.002 (0.0001)} & \textbf{0.003 (0.0003)} & 1.000 (0.0000) & 0.000 (0.0001) & \textbf{0.995 (0.0047)} & \textbf{0.009 (0.0091)} & 292.87 (0.801) \\ \hline
\multirow{8}{*}{\begin{tabular}[c]{@{}c@{}}$p =10,000$ \\ $n =200$\end{tabular}} & LASSO & 0.002 (0.0002) & 0.028 (0.0020) & 1.000 (0.0000) & 0.002 (0.0003) & 0.602 (0.0303) & 0.629 (0.0369) & 1.27 (0.015) \\
 & UniLASSO & 0.002 (0.0003) & 0.030 (0.0066) & 0.970 (0.0153) & 0.002 (0.0006) & 0.577 (0.0465) & 0.638 (0.0537) & 2.60 (0.169) \\
 & BayesLASSO & 0.010 (0.0002) & 0.050 (0.0022) & 0.240 (0.0163) & 0.000 (0.0000) & 0.487 (0.0164) & \textbf{0.000 (0.0000)} & 96.00 (16.883) \\
 & SSLASSO & 0.002 (0.0001) & 0.044 (0.0021) & 1.000 (0.0000) & 0.005 (0.0003) & 0.414 (0.0085) & 0.827 (0.0068) & 38.72 (4.245) \\
 & Horseshoe & \textbf{0.001 (0.0000)} & 0.051 (0.0021) & 1.000 (0.0000) & 0.001 (0.0001) & 0.726 (0.0187) & 0.470 (0.0280) & 113.87 (19.247) \\
 & Horseshoe+ & \textbf{0.001 (0.0000)} & 0.051 (0.0021) & 1.000 (0.0000) & 0.001 (0.0001) & 0.667 (0.0132) & 0.553 (0.0176) & 115.01 (20.001) \\
 & BUGS & \textbf{0.001 (0.0001)} & \textbf{0.006 (0.0011)} & 1.000 (0.0000) & 0.000 (0.0000) & \textbf{0.991 (0.0062)} & \textbf{0.018 (0.0121)} & 4435.43 (21.596) \\
 & BUGS-Active & \textbf{0.001 (0.0001)} & \textbf{0.008 (0.0006)} & 1.000 (0.0000) & 0.000 (0.0000) & \textbf{0.987 (0.0094)} & 0.026 (0.0181) & 2399.43 (7.660) \\ \hline
\multirow{3}{*}{\begin{tabular}[c]{@{}c@{}}$p =100,000$ \\ $n =200$\end{tabular}} & LASSO & 0.0009 (0.00007) & 0.0328 (0.00259) & 0.9800 (0.01333) & 0.0003 (0.00003) & 0.5045 (0.02173) & 0.7366 (0.02091) & 18.2 (1.88) \\
 & UniLASSO & 0.0011 (0.00011) & 0.0465 (0.00513) & 0.9000 (0.02981) & 0.0006 (0.00004) & 0.3611 (0.01731) & 0.8540 (0.01000) & 19.5 (0.66) \\
 & BUGS-Active & \textbf{0.0008 (0.00010)} & \textbf{0.0311 (0.00500)} & 0.8700 (0.02603) & 0.0000 (0.00000) & \textbf{0.9218 (0.01610)} & \textbf{0.0211 (0.01410)} & 5580.4(14.58) \\ \hline
\begin{tabular}[c]{@{}c@{}}$p =1,000,000$ \\ $n =500$\end{tabular} & BUGS-Active & 0.0003 (0.00002) & 0.0914 (0.01032) & 0.6500 (0.03727) & 0.0000 (0.00000) & 0.8034 (0.02230) & 0.0000 (0.00000) & 73575 (169.5) \\ \hline
\end{tabular}}
%%\vspace{0.2cm}
\caption{Simulation results for Scenario 1 comparing LASSO, UniLASSO, Bayesian LASSO, Dirichlet--Laplace, R2D2, spike-and-slab LASSO (SSLASSO), Horseshoe, Horseshoe+, Mhorseshoe, BUGS, and BUGS-Active. Performance is evaluated using root mean squared error of the regression coefficients (RMSE ($\beta$)), mean squared prediction error (MSE ($y$)), true positive rate (TPR), false positive rate (FPR), Matthews correlation coefficient (MCC), false discovery rate (FDR), and computational time in seconds (CompTime). Values are reported as mean (standard error) over 10 independent replications. For dimensions up to $p=10^4$, the two best-performing methods with respect to RMSE ($\beta$), MSE ($y$), MCC, and FDR are highlighted in bold; for $p=10^5$, only the best-performing method under these metrics is highlighted.}
\label{tab:scenario_1}
\end{table}

For Scenario 1, Table~\ref{tab:scenario_1} shows that BUGS consistently achieves an advantageous balance between sensitivity and specificity across moderate and high-dimensional regimes. While most competing methods attain near-perfect TPR, they do so at the cost of substantially higher false discoveries, leading to inferior MCC. In contrast, BUGS maintains essentially perfect signal recovery while achieving markedly lower FDR, resulting in consistently superior MCC. Among competitors, Dirichlet--Laplace and Mhorseshoe provide the closest performance. Dirichlet--Laplace yields very strong FDR control and high MCC, but at the expense of estimation and prediction accuracy, reflecting overly conservative shrinkage. Mhorseshoe, on the other hand, performs strongly in estimation and prediction, but exhibits higher FDR, leading to lower MCC relative to BUGS. Overall, BUGS offers a more balanced operating point, combining competitive estimation and prediction with substantially improved control of false discoveries. The advantage of BUGS becomes more pronounced as dimensionality increases. At $p=10^4$, BUGS continues to achieve near-perfect recovery with extremely low FDR, while BUGS-Active closely matches this performance at reduced computational cost. At $p=10^5$, BUGS-Active outperforms feasible competitors in variable selection, achieving substantially lower FDR while remaining competitive in estimation and prediction. At $p=10^6$, only BUGS-Active remains computationally viable, continuing to recover a meaningful proportion of signals with strong control of false discoveries.

\begin{table}[!t]
\renewcommand{\arraystretch}{1.2}
\resizebox{0.99\columnwidth}{!}{%
\begin{tabular}{c|l|ccccccc}
\hline
$(p,n)$ & Method & RMSE ($\beta$) & MSE ($y$) & TPR & FPR & MCC & FDR & \begin{tabular}[c]{@{}c@{}}CompTime\\ (sec)\end{tabular} \\ \hline
\multirow{10}{*}{\begin{tabular}[c]{@{}c@{}}$p =200$ \\ $n =100$\end{tabular}} & LASSO & 0.032 (0.0033) & 0.051 (0.0054) & 0.990 (0.0100) & 0.173 (0.0152) & 0.443 (0.0226) & 0.757 (0.0182) & 0.06 (0.005) \\
 & UniLASSO & 0.071 (0.0058) & 0.334 (0.0418) & 0.450 (0.0500) & 0.026 (0.0082) & 0.465 (0.0319) & 0.417 (0.0793) & 0.07 (0.009) \\
 & BayesLASSO & 0.055 (0.0042) & 0.074 (0.0073) & 0.990 (0.0100) & 0.638 (0.0109) & 0.162 (0.0065) & 0.924 (0.0015) & 2.06 (0.147) \\
 & Dirich-Laplace & 0.052 (0.0069) & 0.196 (0.0396) & 0.790 (0.0526) & 0.004 (0.0016) & \textbf{0.842 (0.0405)} & \textbf{0.080 (0.0339)} & 36.04 (3.347) \\
 & R2D2 & \textbf{0.015 (0.0015)} & \textbf{0.025 (0.0033)} & 1.000 (0.0000) & 0.065 (0.0095) & 0.663 (0.0292) & 0.526 (0.0370) & 36.53 (2.967) \\
 & SSLASSO & 0.037 (0.0044) & 0.058 (0.0062) & 0.990 (0.0100) & 0.261 (0.0139) & 0.351 (0.0126) & 0.830 (0.0079) & 0.22 (0.045) \\
 & Horseshoe & 0.016 (0.0014) & 0.028 (0.0036) & 1.000 (0.0000) & 0.089 (0.0142) & 0.604 (0.0375) & 0.591 (0.0459) & 1.96 (0.025) \\
 & Horseshoe+ & \textbf{0.014 (0.0013)} & \textbf{0.024 (0.0032)} & 1.000 (0.0000) & 0.054 (0.0086) & 0.703 (0.0348) & 0.470 (0.0487) & 2.03 (0.022) \\
 & Mhorseshoe & 0.016 (0.0015) & \textbf{0.025 (0.0033)} & 1.000 (0.0000) & 0.049 (0.0115) & 0.731 (0.0415) & 0.428 (0.0579) & 14.56 (1.015) \\
 & BUGS & 0.022 (0.0020) & 0.035 (0.0056) & 0.990 (0.0100) & 0.014 (0.0042) & \textbf{0.886 (0.0278)} & \textbf{0.189 (0.0459)} & 61.38 (0.147) \\ \hline
\multirow{10}{*}{\begin{tabular}[c]{@{}c@{}}$p =500$ \\ $n =150$\end{tabular}} & LASSO & 0.019 (0.0013) & 0.052 (0.0045) & 0.990 (0.0100) & 0.097 (0.0092) & 0.404 (0.0226) & 0.815 (0.0184) & 0.08 (0.005) \\
 & UniLASSO & 0.040 (0.0025) & 0.263 (0.0333) & 0.600 (0.0298) & 0.025 (0.0066) & 0.479 (0.0397) & 0.551 (0.0816) & 0.09 (0.004) \\
 & BayesLASSO & 0.039 (0.0017) & 0.077 (0.0051) & 0.970 (0.0153) & 0.435 (0.0104) & 0.151 (0.0041) & 0.956 (0.0009) & 3.99 (0.070) \\
 & Dirich-Laplace & 0.021 (0.0029) & 0.106 (0.0394) & 0.900 (0.0365) & 0.001 (0.0005) & \textbf{0.926 (0.0242)} & \textbf{0.040 (0.0214)} & 362.15 (22.570) \\
 & R2D2 & 0.008 (0.0007) & 0.028 (0.0026) & 1.000 (0.0000) & 0.027 (0.0030) & 0.653 (0.0210) & 0.558 (0.0268) & 349.64 (24.327) \\
 & SSLASSO & 0.020 (0.0015) & 0.067 (0.0042) & 0.970 (0.0153) & 0.149 (0.0057) & 0.311 (0.0109) & 0.881 (0.0051) & 0.81 (0.113) \\
 & Horseshoe & 0.008 (0.0009) & 0.031 (0.0061) & 1.000 (0.0000) & 0.033 (0.0091) & 0.655 (0.0435) & 0.543 (0.0533) & 4.15 (0.063) \\
 & Horseshoe+ & \textbf{0.007 (0.0007)} & 0.023 (0.0032) & 1.000 (0.0000) & 0.022 (0.0032) & 0.699 (0.0271) & 0.495 (0.0374) & 4.43 (0.044) \\
 & Mhorseshoe & \textbf{0.007 (0.0008)} & \textbf{0.020 (0.0023)} & 1.000 (0.0000) & 0.013 (0.0025) & 0.796 (0.0315) & 0.350 (0.0500) & 67.54 (5.155) \\
 & BUGS & 0.008 (0.0011) & \textbf{0.017 (0.0027)} & 0.980 (0.0200) & 0.003 (0.0010) & \textbf{0.931 (0.0199)} & \textbf{0.108 (0.0376)} & 146.41 (0.318) \\ \hline
\multirow{10}{*}{\begin{tabular}[c]{@{}c@{}}$p =1,000$ \\ $n =200$\end{tabular}} & LASSO & 0.012 (0.0009) & 0.044 (0.0040) & 1.000 (0.0000) & 0.050 (0.0057) & 0.413 (0.0191) & 0.818 (0.0156) & 0.13 (0.005) \\
 & UniLASSO & 0.029 (0.0022) & 0.268 (0.0396) & 0.600 (0.0596) & 0.013 (0.0045) & 0.493 (0.0314) & 0.509 (0.0859) & 0.17 (0.007) \\
 & BayesLASSO & 0.032 (0.0009) & 0.075 (0.0030) & 0.920 (0.0249) & 0.255 (0.0092) & 0.151 (0.0069) & 0.964 (0.0016) & 14.46 (1.586) \\
 & Dirich-Laplace & 0.014 (0.0020) & 0.077 (0.0214) & 0.980 (0.0200) & 0.000 (0.0001) & \textbf{0.984 (0.0158)} & \textbf{0.011 (0.0111)} & 890.62 (8.651) \\
 & R2D2 & 0.004 (0.0003) & 0.026 (0.0014) & 1.000 (0.0000) & 0.010 (0.0012) & 0.708 (0.0210) & 0.490 (0.0294) & 895.40 (6.293) \\
 & SSLASSO & 0.016 (0.0021) & 0.070 (0.0027) & 1.000 (0.0000) & 0.099 (0.0045) & 0.291 (0.0062) & 0.906 (0.0035) & 3.13 (0.374) \\
 & Horseshoe & 0.004 (0.0003) & 0.024 (0.0058) & 1.000 (0.0000) & 0.011 (0.0038) & 0.732 (0.0416) & 0.444 (0.0540) & 14.79 (1.055) \\
 & Horseshoe+ & 0.006 (0.0007) & 0.042 (0.0069) & 1.000 (0.0000) & 0.029 (0.0051) & 0.529 (0.0369) & 0.701 (0.0396) & 14.98 (0.865) \\
 & Mhorseshoe & \textbf{0.003 (0.0002)} & \textbf{0.012 (0.0010)} & 1.000 (0.0000) & 0.003 (0.0006) & 0.883 (0.0219) & 0.214 (0.0389) & 129.63 (2.658) \\
 & BUGS & \textbf{0.003 (0.0003)} & \textbf{0.008 (0.0006)} & 1.000 (0.0000) & 0.001 (0.0003) & \textbf{0.973 (0.0129)} & \textbf{0.050 (0.0242)} & 348.02 (0.453) \\ \hline
\multirow{8}{*}{\begin{tabular}[c]{@{}c@{}}$p =10,000$ \\ $n =200$\end{tabular}} & LASSO & 0.009 (0.0004) & 0.116 (0.0116) & 0.780 (0.0200) & 0.005 (0.0005) & 0.338 (0.0195) & 0.849 (0.0161) & 2.95 (0.127) \\
 & UniLASSO & 0.011 (0.0003) & 0.187 (0.0280) & 0.520 (0.0200) & 0.007 (0.0012) & 0.251 (0.0531) & 0.824 (0.0928) & 1.17 (0.028) \\
 & BayesLASSO & 0.014 (0.0002) & 0.093 (0.0041) & 0.190 (0.0314) & 0.000 (0.0000) & 0.422 (0.0365) & \textbf{0.000 (0.0000)} & 67.48 (0.381) \\
 & SSLASSO & 0.006 (0.0006) & 0.089 (0.0037) & 0.920 (0.0389) & 0.010 (0.0004) & 0.281 (0.0144) & 0.913 (0.0054) & 182.74 (17.869) \\
 & Horseshoe & \textbf{0.002 (0.0001)} & 0.096 (0.0038) & 1.000 (0.0000) & 0.002 (0.0001) & 0.626 (0.0082) & 0.607 (0.0103) & 68.28 (0.379) \\
 & Horseshoe+ & \textbf{0.002 (0.0001)} & 0.096 (0.0038) & 1.000 (0.0000) & 0.002 (0.0001) & 0.575 (0.0096) & 0.668 (0.0110) & 69.74 (0.232) \\
 & BUGS & 0.003 (0.0006) & \textbf{0.042 (0.0098)} & 0.960 (0.0267) & 0.000 (0.0001) & \textbf{0.903 (0.0248)} & 0.143 (0.0429) & 2155.98 (5.007) \\
 & BUGS-Active & \textbf{0.002 (0.0003)} & \textbf{0.022 (0.0029)} & 0.980 (0.0200) & 0.000 (0.0000) & \textbf{0.954 (0.0184)} & \textbf{0.069 (0.0263)} & 1596.66 (6.632) \\ \hline
\multirow{3}{*}{\begin{tabular}[c]{@{}c@{}}$p =100,000$ \\ $n =200$\end{tabular}} & LASSO & 0.0026 (0.00008) & 0.1651 (0.02497) & 0.5200 (0.02494) & 0.0003 (0.00008) & 0.3346 (0.03179) & 0.7591 (0.04565) & 12.3 (0.12) \\
 & UniLASSO & 0.0026 (0.00009) & \textbf{0.0776 (0.00968)} & 0.5200 (0.02906) & 0.0007 (0.00006) & 0.1950 (0.01276) & 0.9236 (0.01052) & 14.2 (0.29) \\
 & BUGS-Active & \textbf{0.0023 (0.00009)} & 0.0804 (0.00474) & 0.4900 (0.02769) & 0.0000 (0.00001) & \textbf{0.5176 (0.03610)} & \textbf{0.4495 (0.05004)} & 5637.9 (12.32) \\ \hline
\begin{tabular}[c]{@{}c@{}}$p =1,000,000$ \\ $n =500$\end{tabular} & BUGS-Active & 0.0008 (0.00001) & 0.2077 (0.00623) & 0.4500 (0.01667) & 0.0000 (0.00000) & 0.6631 (0.01596) & 0.0200 (0.02000) & 73291 (57.9) \\ \hline
\end{tabular}}
%%\vspace{0.2cm}
\caption{Simulation results for Scenario 2 comparing LASSO, UniLASSO, Bayesian LASSO, Dirichlet--Laplace, R2D2, spike-and-slab LASSO (SSLASSO), Horseshoe, Horseshoe+, Mhorseshoe, BUGS, and BUGS-Active. Performance is evaluated using root mean squared error of the regression coefficients (RMSE ($\beta$)), mean squared prediction error (MSE ($y$)), true positive rate (TPR), false positive rate (FPR), Matthews correlation coefficient (MCC), false discovery rate (FDR), and computational time in seconds (CompTime). Values are reported as mean (standard error) over 10 independent replications. For dimensions up to $p=10^4$, the two best-performing methods with respect to RMSE ($\beta$), MSE ($y$), MCC, and FDR are highlighted in bold; for $p=10^5$, only the best-performing method under these metrics is highlighted.}
\label{tab:scenario_2}
\end{table}

For Scenario 2, the correlated design substantially increases the difficulty of variable selection, yet the overall pattern in Table~\ref{tab:scenario_2} continues to favor the proposed approach. Across moderate and high-dimensional regimes, BUGS maintains high or near-perfect TPR while achieving substantially lower FDR than most competing methods, resulting in consistently stronger MCC. In contrast, methods such as LASSO, horseshoe variants, R2D2, and SSLASSO either incur markedly higher false discoveries or become overly conservative and lose signal recovery under correlation. Among competitors, Dirichlet--Laplace and Mhorseshoe again provide the closest alternatives. Dirichlet--Laplace achieves very strong FDR control and high MCC, but at the cost of estimation and prediction accuracy due to conservative shrinkage. Mhorseshoe performs well in RMSE$(\beta)$ and MSE$(y)$, but exhibits higher FDR, leading to lower MCC relative to BUGS. Overall, BUGS maintains a more favorable balance between sensitivity and specificity, even under correlated designs. The advantage of the guided approach becomes more pronounced as dimensionality increases. At $p=10^4$, both BUGS and BUGS-Active achieve strong signal recovery with substantially improved FDR control relative to competing methods, yielding the highest MCC in this regime. BUGS-Active closely matches or improves upon the full BUGS fit while reducing computational cost. At $p=10^5$, although TPR declines due to increased problem difficulty, BUGS-Active continues to achieve substantially lower FDR and the strongest MCC among feasible methods. At $p=10^6$, BUGS-Active is the only computationally viable method, maintaining meaningful signal recovery while tightly controlling false discoveries.

Overall, these results demonstrate that the proposed guided shrinkage approach remains effective under correlation and continues to provide a favorable trade-off between sensitivity and specificity, while scaling to ultra-high-dimensional settings beyond the reach of competing methods.
%\vspace{-0.5cm}
\paragraph{Sensitivity and convergence diagnostics.} Detailed results are reported in Supplementary Material S4. Sensitivity analysis across a range of hyperparameter perturbations (Table~S1) shows stable performance relative to the baseline (Table~S2), with negligible variation in estimation and prediction accuracy, consistently high TPR, and low FDR. Noticeable degradation occurs only under deliberately extreme settings, where increased flexibility in the guidance component leads to a moderate rise in false positives. Convergence diagnostics indicate reliable mixing. Gelman--Rubin statistics lie in the range \(0.999\)–\(1.021\) (Table~S3), and effective sample sizes are in the hundreds for most parameters. As expected, the global scale parameter \(\tau\) mixes more slowly (ESS in the tens), but its posterior remains stable. Trace plots (Figure~S1) show no evidence of non-stationarity, and posterior densities (Figure~S2) are unimodal and well-separated between signal and noise coefficients.

%\vspace{-0.8cm}
\section{Analysis of High-Dimensional DNA Methylation Data}\label{sec:case_study}%\vspace{-0.3cm}
We illustrate the proposed methodology using a large-scale DNA methylation dataset from the Growing Up in Singapore Towards healthy Outcomes (GUSTO) birth cohort \citep{Leroy2025}. Genome-wide methylation is measured using the Illumina Infinium MethylationEPIC BeadChip, yielding $p = 865{,}859$ CpG probes per individual, with $n = 1051$ samples available for analysis. DNA methylation is widely used to study biological ageing and early-life development, making this dataset a relevant and challenging testbed for high-dimensional inference. Measurements are collected at ages $3$, $9$, $48$, and $72$ months, and age (in months) is treated as a continuous response. The data represent a challenging ultra-high-dimensional setting ($p \gg n$) with substantial correlation across CpG sites. We analyze all probes jointly without pre-screening to identify CpGs associated with age, providing a stringent test of scalability and control of spurious associations.
%\vspace{-0.3cm}
\paragraph{Out-of-sample predictive evaluation.}
Predictive performance is assessed using an 80--20 train--test split, with preprocessing based on training data only. We compare BUGS-Active with its unguided counterpart, which removes the marginal guidance component while retaining the same hierarchical prior. Posterior inference is conducted via MCMC, with predictions obtained using posterior mean coefficients. Performance is evaluated using RMSE, MAE, correlation, and $R^2$. As shown in Table~\ref{tab:cv_8020}, both models achieve strong predictive accuracy, while the guided model consistently improves upon the unguided baseline across all metrics, demonstrating the practical benefit of incorporating marginal guidance.
%\vspace{-0.4cm}
\begin{table}[!h]
\centering
\label{tab:cv_8020}
\resizebox{0.65\columnwidth}{!}{%
\begin{tabular}{lcccc}
\toprule
Method & RMSE & MAE & Corr & $R^2$ \\
\midrule
BUGS-Active & 4.882  & 3.705  & 0.985  & 0.971  \\
BUGS-Active (unguided)& 6.210 & 4.752 & 0.977  & 0.953 \\
\bottomrule
\end{tabular}}
\caption{Out-of-sample predictive performance under an 80--20 train--test split for BUGS-Active and its unguided counterpart. Corr denotes the Pearson correlation between observed and predicted responses on the test set.}
\label{tab:cv_8020}
\end{table}

\begin{figure}[!t]
\centering

% -------- Row 1 --------
\begin{subfigure}[t]{0.495\linewidth}
\centering
\includegraphics[width=\linewidth]{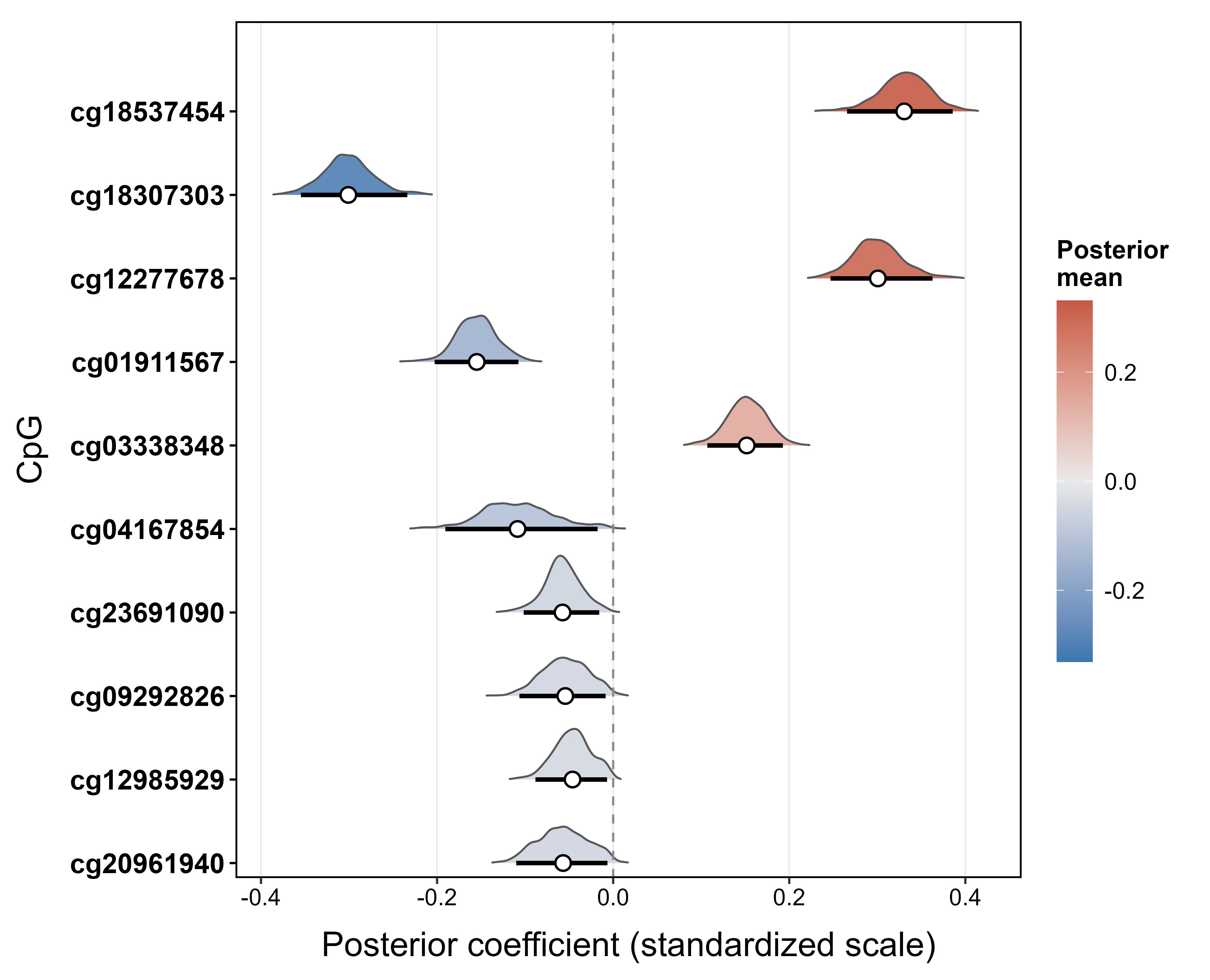}
\end{subfigure}\hfill
\begin{subfigure}[t]{0.495\linewidth}
\centering
\includegraphics[width=\linewidth]{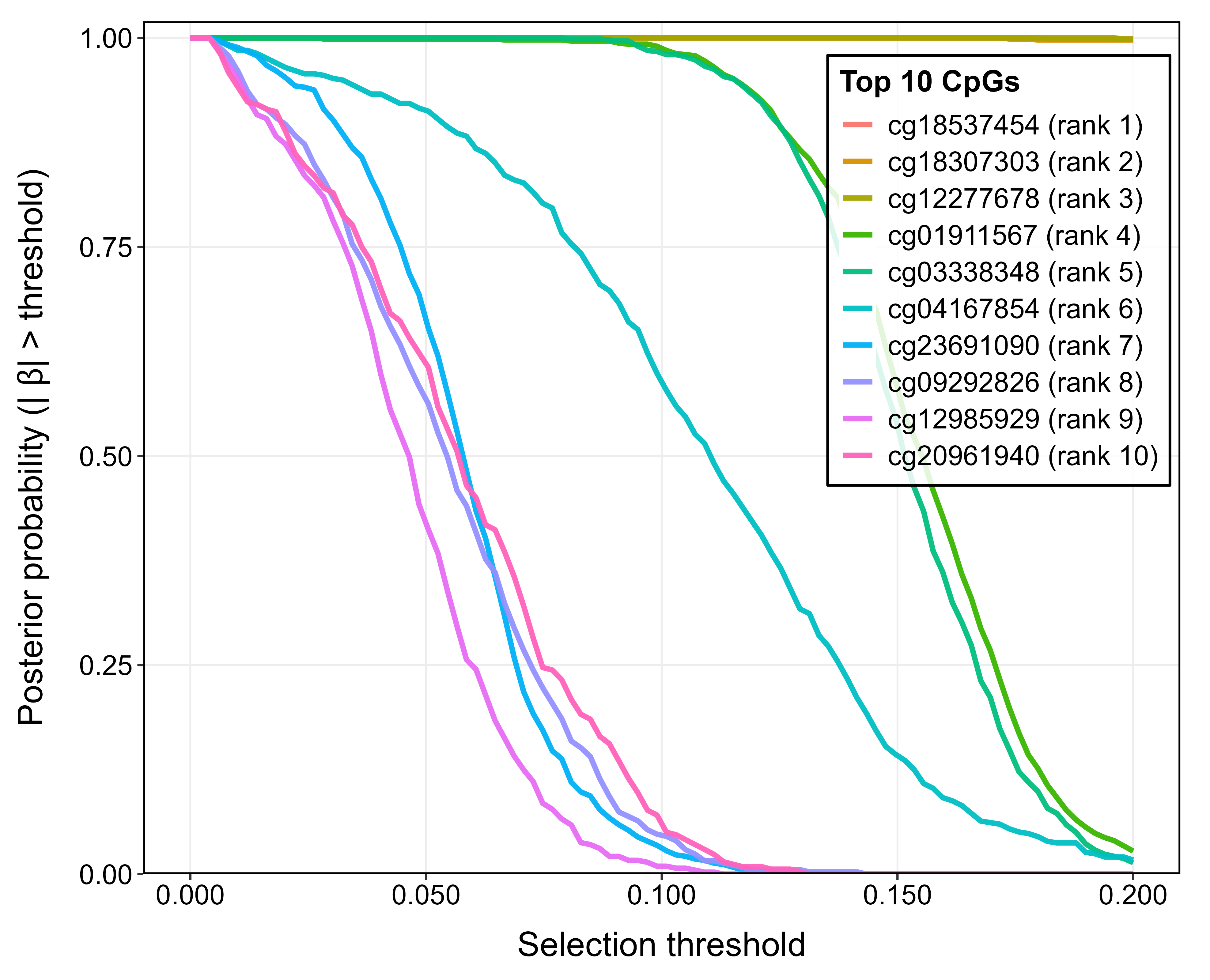}
\end{subfigure}

%\vspace{0.2cm} % tighten vertical gap

% -------- Row 2 --------
\begin{subfigure}[t]{0.495\linewidth}
\centering
\includegraphics[width=\linewidth]{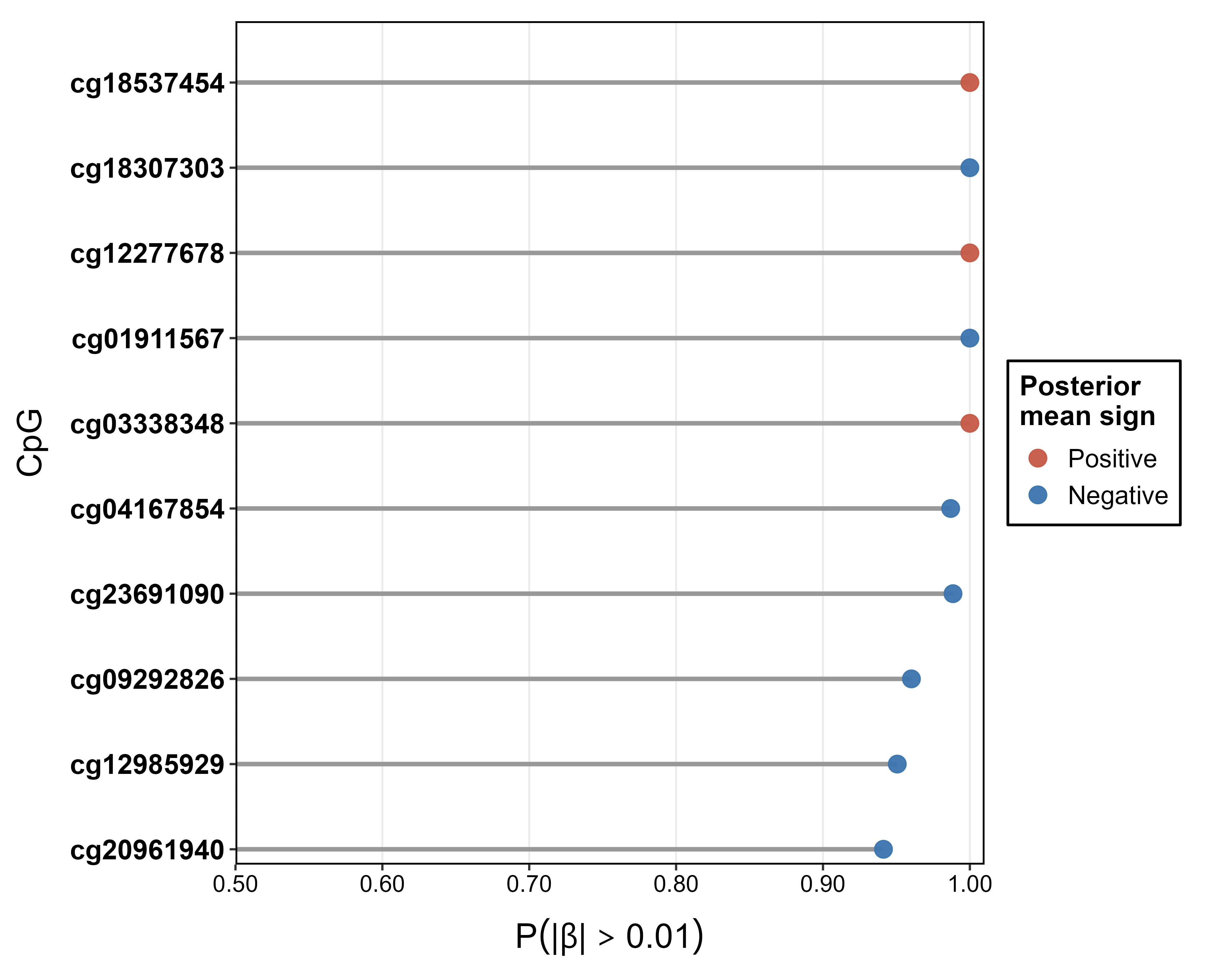}
\end{subfigure}\hfill
\begin{subfigure}[t]{0.495\linewidth}
\centering
\includegraphics[width=\linewidth]
{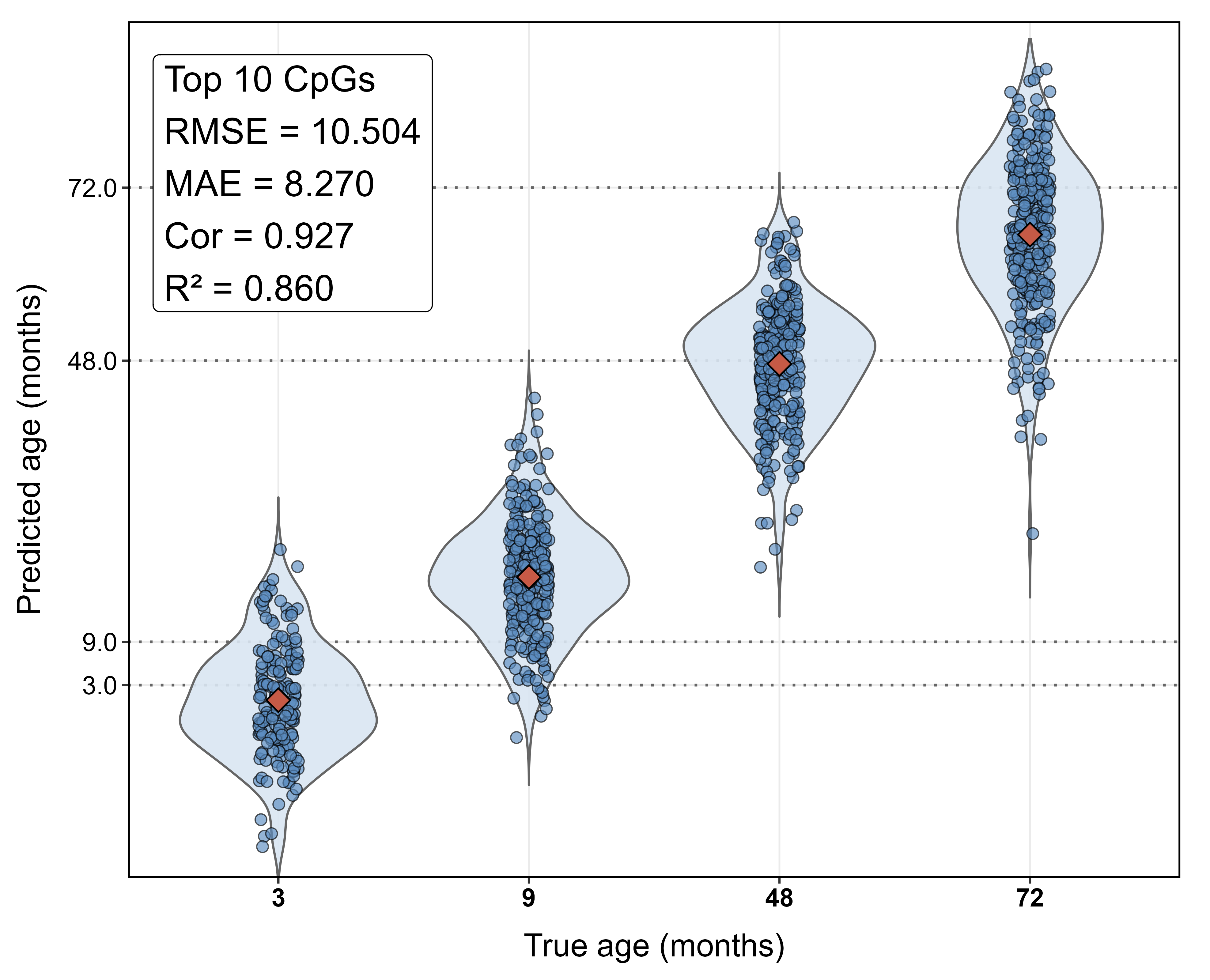}
\end{subfigure}

%\vspace{-2mm}

\caption{Summary of the top 10 CpGs identified by the BUGS framework. \textit{Top-left}: posterior distributions of regression coefficients (standardized scale) with credible intervals and posterior means. \textit{Top-right}: sensitivity of posterior selection probabilities across varying thresholds; coefficients measured at standardized scale. \textit{Bottom-left}: posterior selection probabilities and corresponding effect directions (sign of posterior means); coefficients measured at standardized scale. \textit{Bottom-right}: predicted (using top 10 CpGs) versus true age distributions across observed time points, with summary performance metrics.
}
\label{fig:top10_cpgs_summary}
\end{figure}

%\vspace{-0.3cm}
\paragraph{Identification and characterization of top CpGs.}
We refit the BUGS-Active model on the full dataset and examine posterior summaries for the top 10 CpG sites ranked by absolute posterior mean effect size. Figure~\ref{fig:top10_cpgs_summary} summarizes their statistical behavior, and Table~\ref{tab:top_cpgs} provides genomic annotation and effect estimates on the original scale. The selected CpGs exhibit clear separation from zero with relatively narrow credible intervals, indicating stable and well-identified effects. Posterior selection probabilities $P(|\beta_j| > 0.01)$ are uniformly high, with consistent effect directions, confirming robustness under posterior uncertainty. The threshold sensitivity curves further indicate a clear ordering of effect strength, with top-ranked CpGs maintaining near-unit selection probability across a wide range of thresholds. The identified CpGs span diverse genomic regions, including promoter-proximal regions, gene bodies, and CpG island contexts, suggesting that age-associated methylation signals are not confined to a single functional category. All reported effects are on the original scale and have credible intervals excluding zero.

\begin{table}[!t]
\centering
\setlength{\tabcolsep}{4pt}
\rowcolors{2}{gray!8}{white}
\resizebox{0.8\columnwidth}{!}{%
\begin{tabular}{l l l l l r r}
\toprule
CpG & Gene(s) & Region & Chr:Pos & Island & $\hat{\beta}$ & 95\% CI \\
\midrule
\textbf{cg18537454} & \textemdash & \textemdash & chr10:22623213 & N\_Shore & 134.72 & [108.27, 157.19] \\
\shortstack{\textbf{cg18307303} \\ $\;$} & \shortstack{IL12B \\ $\;$} & \shortstack{1stExon/\\5'UTR} & \shortstack{chr5:158757456 \\ $\;$} & \shortstack{N\_Shore \\ $\;$} & \shortstack{-160.58 \\ $\;$} & \shortstack{[-189.31, -124.81] \\ $\;$} \\
\textbf{cg12277678} & LINC00703 & TSS200 & chr10:4426258 & OpenSea & 146.62 & [120.41, 176.92] \\
\textbf{cg01911567} & FRMD4A & Body & chr10:13727536 & OpenSea & -57.58 & [-75.36, -39.97] \\
\shortstack{\textbf{cg03338348} \\ $\;$ \\ $\;$} & \shortstack{TSNAXIP1,\\ RANBP10 \\ $\;$} & \shortstack{TSS1500/\\1stExon/\\5'UTR} & \shortstack{chr16:67840472 \\ $\;$ \\ $\;$} & \shortstack{Island \\ $\;$ \\ $\;$} & \shortstack{24.36 \\ $\;$ \\ $\;$} & \shortstack{[17.17, 30.95] \\ $\;$ \\ $\;$} \\
\textbf{cg04167854} & SERGEF & Body & chr11:18031127 & N\_Shelf & -30.28 & [-53.17, -4.93] \\
\shortstack{\textbf{cg23691090} \\ $\;$ \\ $\;$} & \shortstack{C22orf26,\\ LOC150381 \\ $\;$} & \shortstack{1stExon/\\5'UTR/\\Body} & \shortstack{chr22:46449981 \\ $\;$ \\ $\;$} & \shortstack{Island \\ $\;$ \\ $\;$} & \shortstack{-62.22 \\ $\;$ \\ $\;$} & \shortstack{[-109.90, -17.09] \\ $\;$ \\ $\;$} \\
\textbf{cg09292826} & \shortstack{TCTEX1D4,\\ BTBD19} & \shortstack{TSS1500/\\TSS200} & chr1:45274032 & S\_Shore & -38.77 & [-75.95, -6.10] \\
\shortstack{\textbf{cg12985929} \\ $\;$} & \shortstack{SEPT9 \\ $\;$} & \shortstack{5'UTR/\\Body} & \shortstack{chr17:75370611 \\ $\;$} & \shortstack{S\_Shore \\ $\;$} & \shortstack{-25.97 \\ $\;$} & \shortstack{[-49.70, -3.91] \\ $\;$} \\
\textbf{cg20961940} & SLC44A4 & Body & chr6:31832850 & S\_Shore & -36.79 & [-71.26, -4.18] \\
\bottomrule
\end{tabular}}
%\vspace{2mm}
\caption{
Top 10 CpG sites associated with age identified by the BUGS-Active model. Each row corresponds to a CpG probe, with genomic annotation and posterior effect summaries. \textit{CpG} denotes the probe identifier. \textit{Gene(s)} lists annotated gene(s) associated with the CpG site. \textit{Region} indicates the genomic context relative to gene structure. \textit{Chr:Pos} gives the chromosomal location. \textit{Island} denotes the CpG island context. $\hat{\beta}$, the posterior mean regression coefficient, is noted in the original (raw) scale.}
\label{tab:top_cpgs}
\end{table}

\paragraph{Predictive behavior across developmental stages.}
Predictions based on the top 10 CpGs remain strong, with correlation $0.927$, $R^2 = 0.860$, RMSE $= 10.504$, and MAE $= 8.270$, indicating that a small subset of CpGs captures substantial age-related variation. Predictive accuracy varies across developmental stages, with tighter alignment at 3 and 48 months and greater variability at 9 and 72 months, reflecting heterogeneity in methylation--age relationships across development.

Overall, these results demonstrate that the proposed guided shrinkage framework yields sparse yet highly informative models with strong predictive performance and stable posterior inference in ultra-high-dimensional settings.
%\vspace{-0.6cm}
\section{Discussion}\label{sec:discussion}%\vspace{-0.3cm}
We examine practical considerations and limitations of the proposed framework, focusing in particular on the BUGS-Active approximation and the trade-offs it introduces between computational efficiency and statistical accuracy in high-dimensional settings. The BUGS framework introduces a guidance-modulated global--local shrinkage mechanism, while the BUGS-Active variant provides a scalable approximation tailored to such regimes. The computational gain arises from restricting local scale updates to a data-adaptive active set, reducing per-iteration complexity from $O(p)$ to $O(|A_n|)$, where $|A_n| \ll p$. In particular, regression coefficients $\beta$ are updated globally, while local shrinkage parameters $\{\lambda_j\}$ are updated only within the active set, with inactive coordinates fixed at a baseline value.

The active set construction operationalizes the screening principles developed in Section~\ref{subsec:BUGS_active}. Specifically, a subset of variables with the largest marginal guidance scores is always retained, referred to as the \emph{guidance budget}, while additional variables are adaptively included at each MCMC iteration based on their current coefficient magnitudes exceeding a user-specified threshold (e.g., $10^{-4}$). The guidance budget plays the role of enforcing signal--noise separation through marginal information, while the coefficient-based rule provides a posterior-driven correction mechanism. Together, these components approximate the idealized screening conditions in Assumptions (B1)--(B2), ensuring that variables with either strong marginal or conditional evidence are actively updated.

\begin{figure}[!t]
\centering
% -------- Row 1 --------
\begin{subfigure}[t]{0.495\linewidth}
\centering
\includegraphics[width=\linewidth]{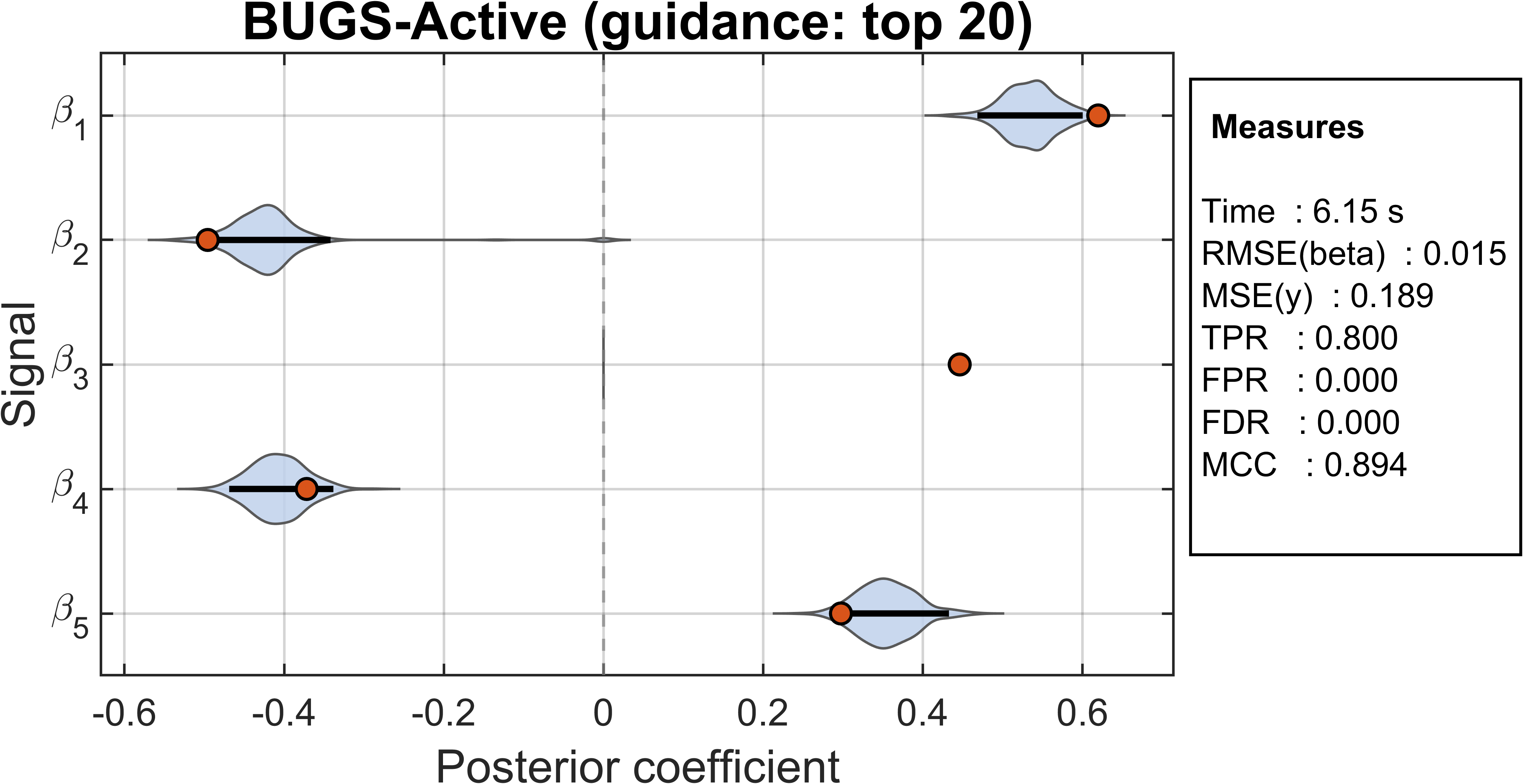}
\end{subfigure}\hfill
\begin{subfigure}[t]{0.495\linewidth}
\centering
\includegraphics[width=\linewidth]{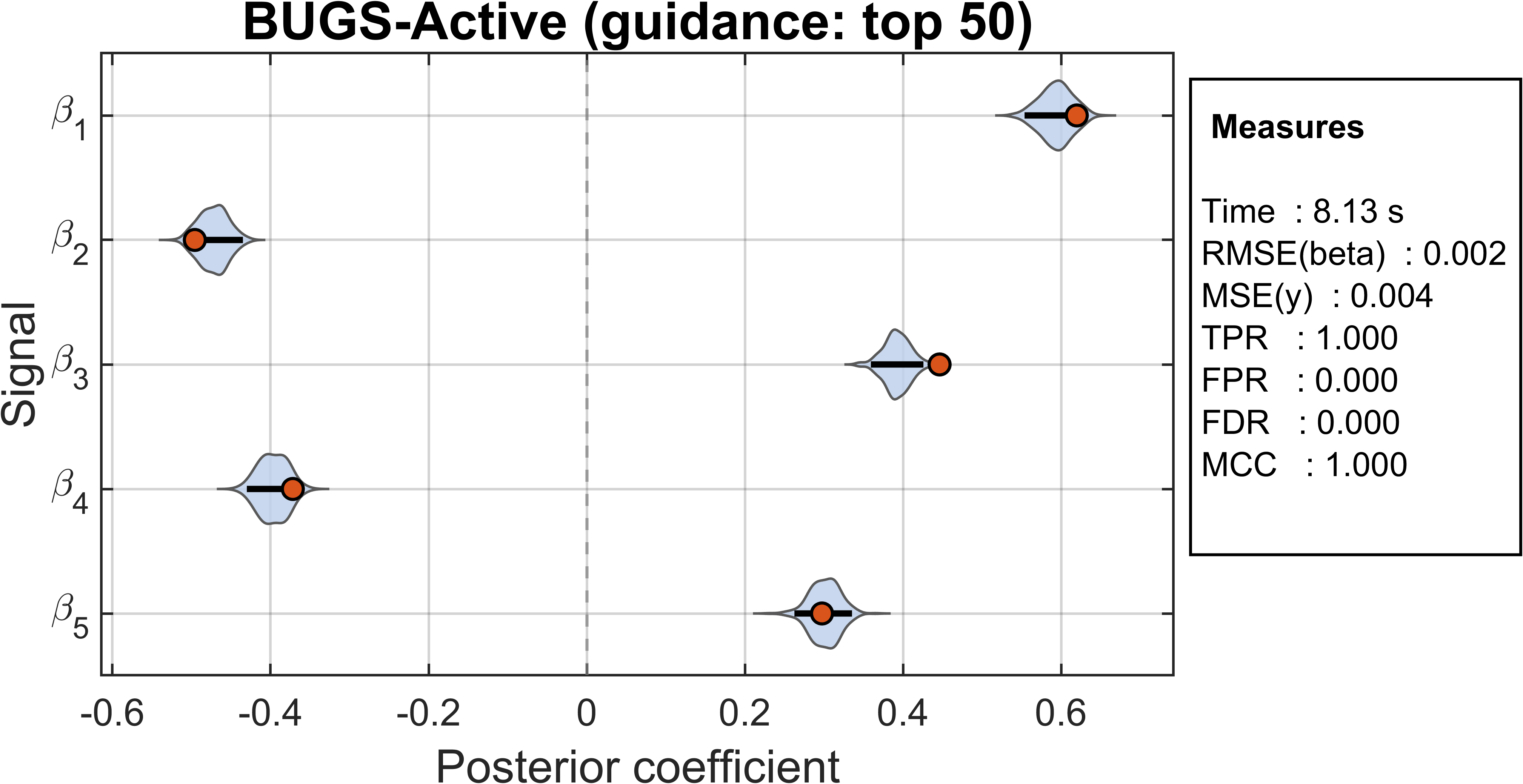}
\end{subfigure}

%\vspace{0.2cm} % tighten vertical gap

% -------- Row 2 --------
\begin{subfigure}[t]{0.495\linewidth}
\centering
\includegraphics[width=\linewidth]{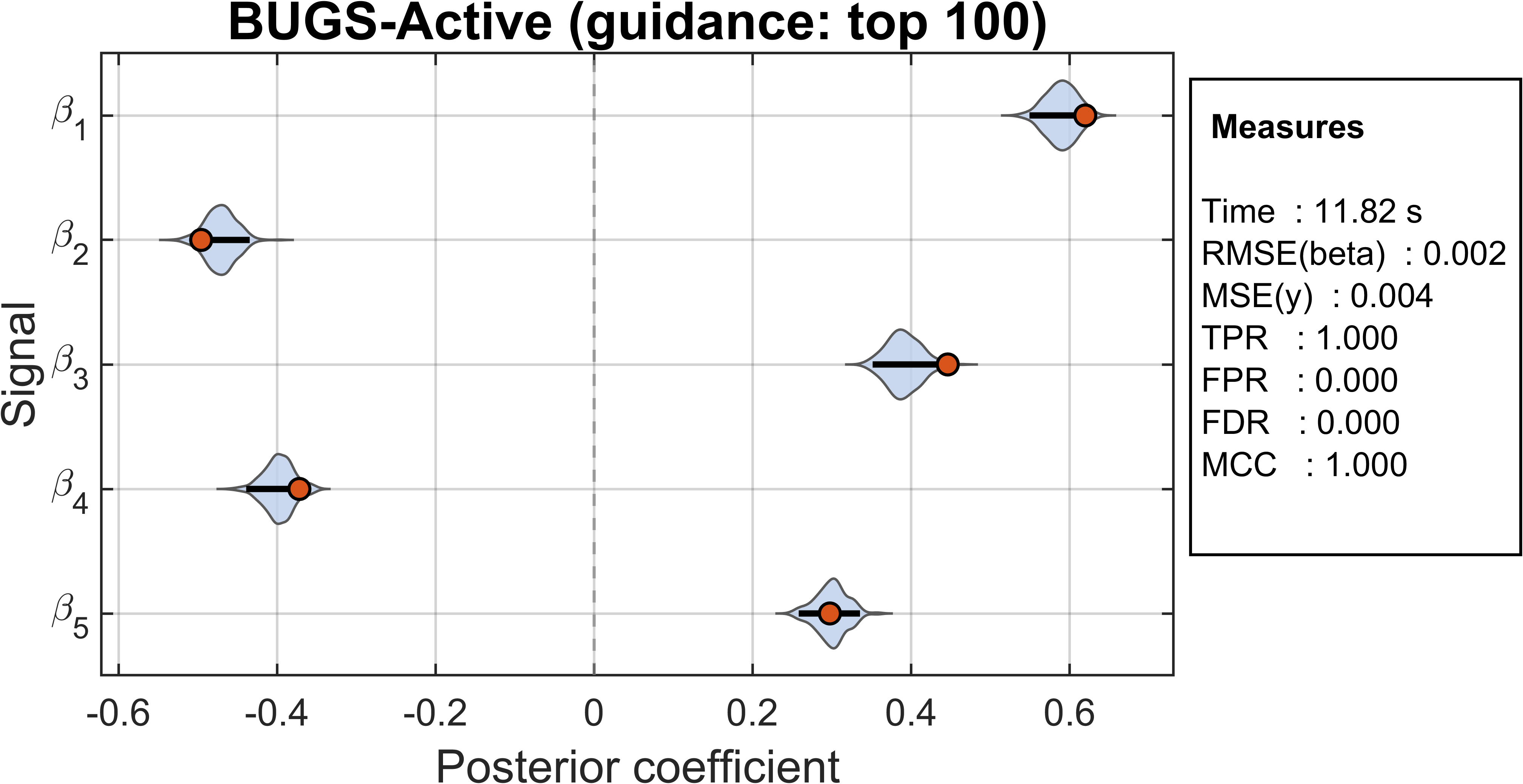}
\end{subfigure}\hfill
\begin{subfigure}[t]{0.495\linewidth}
\centering
\includegraphics[width=\linewidth]
{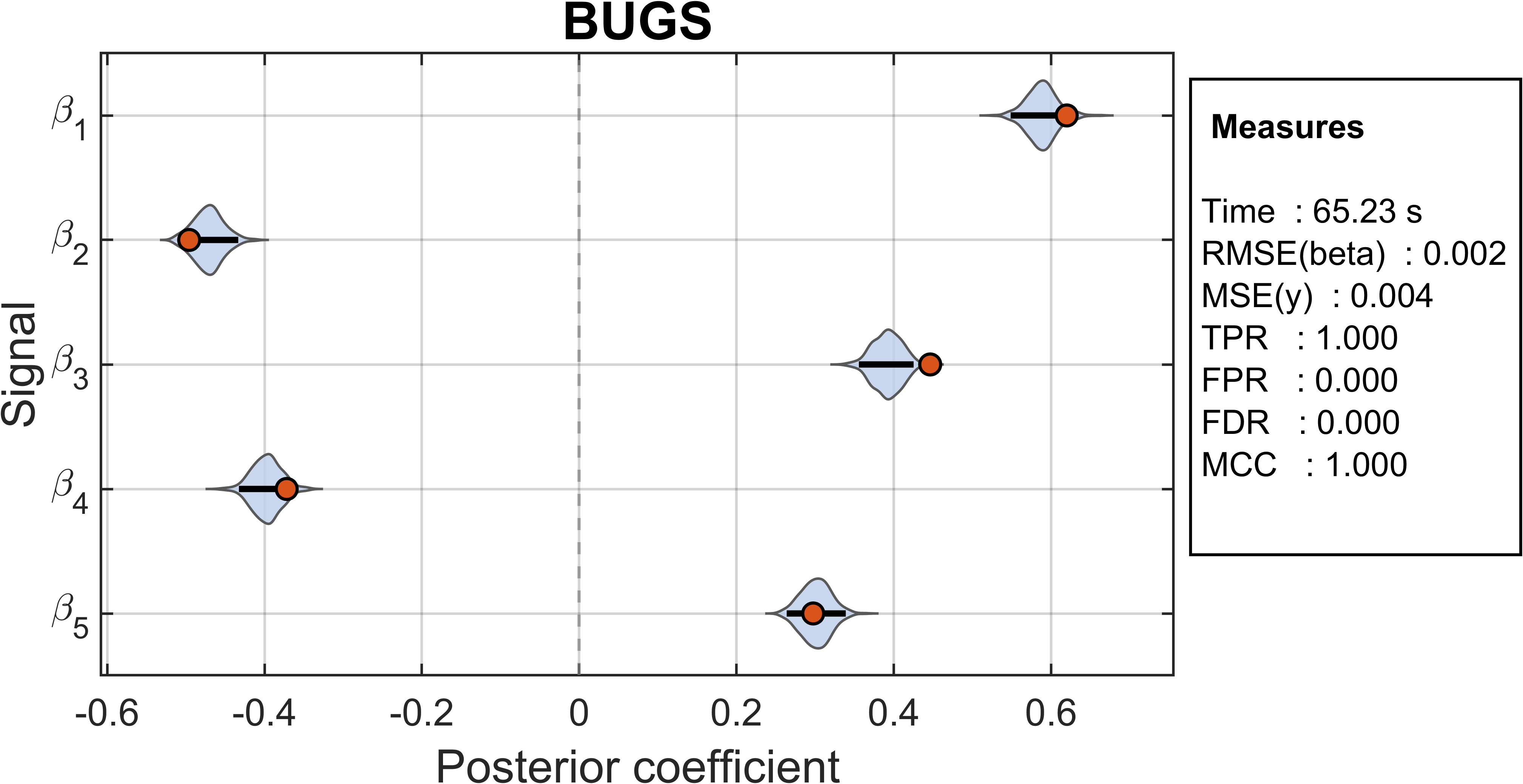}
\end{subfigure}
\caption{Effect of the guidance budget on posterior estimation in BUGS-Active under Scenario~1 with $(n,p)=(200,1000)$ and $s_0=5$. Panels correspond to guidance budgets of $20$ (top-left), $50$ (top-right), $100$ (bottom-left), and the full BUGS model without active-set restriction (bottom-right). Each panel displays posterior summaries for the top five true signals, including credible intervals, and density estimates, with true values indicated by red circles.}
\label{fig:guidance_effect}
\end{figure}

Two practical consequences follow. First, variables with weak marginal scores but strong conditional effects are not permanently excluded, as they may enter the active set dynamically during the MCMC evolution once their posterior coefficients exceed the threshold at a given iteration. Second, the size of the guidance budget is critical for balancing computational efficiency and statistical accuracy. If chosen too small relative to the ambient dimension $p$, the active set may become overly restrictive, limiting the propagation of local scale adaptation and thereby distorting the global--local shrinkage mechanism. This behavior is illustrated in Figure~\ref{fig:guidance_effect}, based on the same simulation setting as Scenario~1 with $(n,p)=(200,1000)$ and $s_0=5$. When the guidance budget is small (e.g., retaining only the top $20$ variables, corresponding to $2\%$ of $p$), BUGS-Active exhibits degraded estimation accuracy despite all true signals being included among those top 20 marginal scores. Increasing the budget to $50$ ($5\%$ of $p$) yields performance nearly indistinguishable from the full BUGS model, while substantially reducing computational cost (e.g., $8.13$ seconds versus $65.23$ seconds). Further increasing the budget to $100$ ($10\%$ of $p$) provides negligible improvement but increases runtime, indicating diminishing returns.

Overall, these results highlight that while marginal guidance provides an effective mechanism for constructing active sets, overly aggressive restriction can impair the global shrinkage dynamics. In practice, moderate guidance budgets, combined with coefficient-based inclusion and minimum active set constraints, provide a robust trade-off between computational efficiency and statistical performance. The BUGS-Active algorithm thus offers a principled and scalable approximation, provided that the active set is chosen to preserve sufficient flexibility in the shrinkage mechanism. An interesting direction for future work is to develop theoretically grounded, data-adaptive rules for selecting the guidance budget, as while results such as Proposition~\ref{prop:screening} provide qualitative support for screening-based inclusion, more precise and practically calibrated choices remain to be established.

%\vspace{-0.8cm}
\section{Conclusion}\label{sec:conclusion}%\vspace{-0.3cm}
We have proposed a marginally guided global--local shrinkage framework that integrates univariate relevance information directly into the shrinkage mechanism, yielding a continuous alternative to screening-based procedures within a fully Bayesian formulation. By embedding marginal guidance within the nonlinear variance mapping of the regularized horseshoe prior, the proposed approach modifies the transition between strong shrinkage and slab-like behavior, rather than merely rescaling prior variance, leading to structurally distinct shrinkage dynamics.

Across a wide range of simulation settings and a large-scale methylation application, the method consistently achieves a favorable balance between sensitivity and specificity, attaining near-perfect signal recovery while maintaining substantially lower false discovery rates than competing approaches. This improved control of false discoveries translates into consistently higher overall selection quality, as reflected in the Matthews correlation coefficient, without sacrificing estimation or predictive accuracy. These gains are particularly pronounced in high-dimensional and correlated settings, where conventional methods typically exhibit a trade-off between signal detection and false discovery control.

To enable scalability, we introduced the BUGS-Active approximation, which reduces computational complexity from $O(p)$ to $O(|A_n|)$ per iteration by restricting local updates to a data-adaptive active set while retaining global updates for regression coefficients. Empirically, BUGS-Active closely matches the performance of the full model while substantially reducing computation, and remains viable in ultra-high-dimensional regimes (e.g., $p=10^6$) where competing Bayesian methods are no longer computationally feasible. Theoretical results further establish that the active-set construction preserves support recovery and posterior contraction rates under suitable screening conditions, providing a principled foundation for scalable posterior inference.

Overall, the proposed framework demonstrates that incorporating marginal guidance within global--local shrinkage can substantially improve variable selection in high-dimensional settings, particularly through enhanced control of false discoveries, while maintaining computational tractability via adaptive active-set strategies. The combination of strong empirical performance, theoretical guarantees, and scalability positions the method as a practical and robust tool for modern high-dimensional inference. Future work includes extending the framework to generalized linear and survival models, where incorporating marginal guidance within non-Gaussian likelihoods may further broaden its applicability. Another important direction is the development of theoretically grounded, data-adaptive strategies for calibrating the guidance budget and related tuning parameters in ultra-high-dimensional settings.

\begin{center}
{\large\bf SUPPLEMENTARY MATERIAL}
\end{center}
\begin{description}
\item[Supplementary text:] Supplementary material is provided as a separate pdf document.

\item[Code and data:] Codes for BUGS and BUGS-Active are made available on GitHub at\\ \href{https://github.com/priyamdas2/BUGS}{https://github.com/priyamdas2/BUGS}. The dataset is obtained from the Gene Expression Omnibus (GEO) under accession GSE254135.%\href{https://www.ncbi.nlm.nih.gov/geo/query/acc.cgi?acc=GSE254135}{GSE254135}.}
\item[Code and data:] Code for BUGS and BUGS-Active, including demonstration scripts for fitting them to similarly structured datasets, will be provided as a MATLAB toolbox and made available on GitHub. The dataset is obtained from the Gene Expression Omnibus (GEO) under accession GSE254135.

\end{description}

\bibliographystyle{plainnat}
\bibliography{Bibliography-MM-MC}

\begin{thebibliography}{28}
\providecommand{\natexlab}[1]{#1}
\providecommand{\url}[1]{\texttt{#1}}
\expandafter\ifx\csname urlstyle\endcsname\relax
  \providecommand{\doi}[1]{doi: #1}\else
  \providecommand{\doi}{doi: \begingroup \urlstyle{rm}\Url}\fi

\bibitem[{Bhadra et al.}(2017)]{bhadra2017horseshoeplus}
{Bhadra et al.}
\newblock The horseshoe+ estimator of ultra-sparse signals.
\newblock \emph{Bayesian Analysis}, 12\penalty0 (4):\penalty0 1105--1131, 2017.

\bibitem[Bhattacharya et~al.(2016)Bhattacharya, Chakraborty, and Mallick]{bhattacharya2016fast}
A.~Bhattacharya, A.~Chakraborty, and B.~Mallick.
\newblock Fast sampling with gaussian scale mixture priors in high-dimensional regression.
\newblock \emph{Biometrika}, 103\penalty0 (4):\penalty0 985--991, 2016.
\newblock \doi{10.1093/biomet/asw042}.

\bibitem[{Bhattacharya et al.}(2015)]{bhattacharya2015dl}
{Bhattacharya et al.}
\newblock Dirichlet--laplace priors for optimal shrinkage.
\newblock \emph{Journal of the American Statistical Association}, 110\penalty0 (512):\penalty0 1479--1490, 2015.

\bibitem[{Caporaso et al.}(2011)]{Caporaso2011}
{Caporaso et al.}
\newblock Global patterns of 16s rrna diversity at a depth of millions of sequences per sample.
\newblock \emph{Proc. Natl. Acad. Sci.}, 108\penalty0 (1):\penalty0 4516--4522, 2011.
\newblock \doi{10.1073/pnas.1000080107}.

\bibitem[Carvalho et~al.(2010)Carvalho, Polson, and Scott]{carvalho2010horseshoe}
C.~Carvalho, N.~Polson, and J.~Scott.
\newblock The horseshoe estimator for sparse signals.
\newblock \emph{Biometrika}, 97\penalty0 (2):\penalty0 465--480, 2010.

\bibitem[Chatterjee et~al.(2025)Chatterjee, Hastie, and Tibshirani]{chatterjee2025unilasso}
S.~Chatterjee, T.~Hastie, and R.~Tibshirani.
\newblock Univariate-guided sparse regression.
\newblock \emph{Harvard Data Science Review}, 7\penalty0 (3), 2025.

\bibitem[Efron(2010)]{efron2010large}
B.~Efron.
\newblock \emph{Large-Scale Inference: Empirical Bayes Methods for Estimation, Testing, and Prediction}.
\newblock Cambridge University Press, Cambridge, 2010.

\bibitem[Fan and Lv(2008)]{fan2008sis}
J.~Fan and J.~Lv.
\newblock Sure independence screening for ultrahigh dimensional feature space.
\newblock \emph{Journal of the Royal Statistical Society: Series B}, 70\penalty0 (5):\penalty0 849--911, 2008.

\bibitem[Friedman et~al.(2010)Friedman, Hastie, and Tibshirani]{friedman2010glmnet}
Jerome Friedman, Trevor Hastie, and Robert Tibshirani.
\newblock Regularization paths for generalized linear models via coordinate descent.
\newblock \emph{Journal of Statistical Software}, 33\penalty0 (1):\penalty0 1--22, 2010.

\bibitem[Ghosal et~al.(2000)Ghosal, Ghosh, and van~der Vaart]{GhosalGhoshvdV2000}
S.~Ghosal, J.K. Ghosh, and A.~van~der Vaart.
\newblock Convergence rates of posterior distributions.
\newblock \emph{The Annals of Statistics}, 28\penalty0 (2):\penalty0 500--531, 2000.
\newblock \doi{10.1214/aos/1016218228}.

\bibitem[Johndrow et~al.(2020)Johndrow, Orenstein, and Bhattacharya]{johndrow2020scalable}
J.~Johndrow, P.~Orenstein, and A.~Bhattacharya.
\newblock Scalable approximate mcmc algorithms for the horseshoe prior.
\newblock \emph{Journal of Machine Learning Research}, 21\penalty0 (73):\penalty0 1--61, 2020.

\bibitem[Kang and Lee(2025)]{kang2025mhorseshoe}
M.~Kang and K.~Lee.
\newblock \emph{Mhorseshoe: Approximate Algorithm for Horseshoe Prior}, 2025.
\newblock R package version 0.1.5.

\bibitem[{Leroy et al.}(2025)]{Leroy2025}
{Leroy et al.}
\newblock Longitudinal prediction of dna methylation to forecast epigenetic outcomes.
\newblock \emph{eBioMedicine}, 115:\penalty0 105709, 2025.

\bibitem[Li(2015)]{li_microbiome}
Hongzhe Li.
\newblock Microbiome, metagenomics, and high-dimensional compositional data analysis.
\newblock \emph{Annual Review of Statistics and Its Application}, 2:\penalty0 73--94, 2015.
\newblock \doi{10.1146/annurev-statistics-010814-020351}.

\bibitem[Makalic and Schmidt(2020)]{bayesreg_matlab}
Enes Makalic and Daniel~F. Schmidt.
\newblock {BayesReg: Flexible Bayesian Penalized Regression Modelling}.
\newblock \url{https://www.mathworks.com/matlabcentral/fileexchange/60823-flexible-bayesian-penalized-regression-modelling}, 2020.

\bibitem[Neal(2003)]{neal2003slice}
R.~Neal.
\newblock Slice sampling.
\newblock \emph{Annals of Statistics}, 31\penalty0 (3):\penalty0 705--767, 2003.

\bibitem[Park and Casella(2008)]{park2008blasso}
T.~Park and G.~Casella.
\newblock The bayesian lasso.
\newblock \emph{Journal of the American Statistical Association}, 103\penalty0 (482):\penalty0 681--686, 2008.
\newblock \doi{10.1198/016214508000000337}.

\bibitem[Piironen and Vehtari(2017)]{PiironenVehtari2017}
Juho Piironen and Aki Vehtari.
\newblock Sparsity information and regularization in the horseshoe and other shrinkage priors.
\newblock \emph{Electronic Journal of Statistics}, 11\penalty0 (2):\penalty0 5018--5051, 2017.

\bibitem[Polson and Scott(2011)]{PolsonScott2011}
N.~Polson and J.~Scott.
\newblock Shrink globally, act locally: Sparse bayesian regularization and prediction.
\newblock In J.~M. Bernardo, M.~J. Bayarri, J.~O. Berger, A.~P. Dawid, D.~Heckerman, A.~F.~M. Smith, and M.~West, editors, \emph{Bayesian Statistics 9}, pages 501--538. Oxford University Press, 2011.

\bibitem[Robbins(1956)]{robbins1956empirical}
H.~Robbins.
\newblock An empirical bayes approach to statistics.
\newblock \emph{Proceedings of the Third Berkeley Symposium on Mathematical Statistics and Probability}, 1:\penalty0 157--163, 1956.

\bibitem[Rockova and George(2018)]{rockova2018ssl}
V.~Rockova and E.~George.
\newblock The spike-and-slab {LASSO}.
\newblock \emph{Journal of the American Statistical Association}, 113\penalty0 (521):\penalty0 431--444, 2018.
\newblock \doi{10.1080/01621459.2016.1260469}.

\bibitem[{Shokralla et al.}(2012)]{Shokralla2012}
{Shokralla et al.}
\newblock Next-generation sequencing technologies for environmental dna research.
\newblock \emph{Molecular Ecology}, 21\penalty0 (8):\penalty0 1794--1805, 2012.
\newblock \doi{10.1111/j.1365-294X.2012.05538.x}.

\bibitem[Tibshirani(1996)]{Tibshirani1996}
R.~Tibshirani.
\newblock Regression shrinkage and selection via the {L}asso.
\newblock \emph{Journal of the Royal Statistical Society: Series B}, 58\penalty0 (1):\penalty0 267--288, 1996.

\bibitem[van~der Pas et~al.(2014)van~der Pas, Kleijn, and van~der Vaart]{vanderPasKleijnvdV2014}
S.~van~der Pas, B.~Kleijn, and A.~van~der Vaart.
\newblock The horseshoe estimator: Posterior concentration around nearly black vectors.
\newblock \emph{Electronic Journal of Statistics}, 8\penalty0 (2):\penalty0 2585--2618, 2014.

\bibitem[van~der Pas et~al.(2017)van~der Pas, Szab{\'o}, and van~der Vaart]{vanderPasSzabovdV2017}
S.~van~der Pas, B.~Szab{\'o}, and A.~van~der Vaart.
\newblock Adaptive posterior contraction rates for the horseshoe.
\newblock \emph{Electronic Journal of Statistics}, 11\penalty0 (2):\penalty0 3196--3225, 2017.

\bibitem[{Zhang et al.}(2022)]{zhang2022r2d2}
{Zhang et al.}
\newblock Bayesian regression using a prior on the model fit: The {R2-D2} shrinkage prior.
\newblock \emph{Journal of the American Statistical Association}, 117\penalty0 (538):\penalty0 862--874, 2022.

\bibitem[Zou(2006)]{Zou2006}
H.~Zou.
\newblock The adaptive lasso and its oracle properties.
\newblock \emph{Journal of the American Statistical Association}, 101\penalty0 (476):\penalty0 1418--1429, 2006.
\newblock \doi{10.1198/016214506000000735}.

\bibitem[Zou and Hastie(2005)]{Zou2005}
H.~Zou and T.~Hastie.
\newblock Regularization and variable selection via the elastic net.
\newblock \emph{Journal of the Royal Statistical Society: Series B}, 67\penalty0 (2):\penalty0 301--320, 2005.
\newblock \doi{10.1111/j.1467-9868.2005.00503.x}.

\end{thebibliography}
\end{document}